\newcommand{\ab}{{\mathbf a}}
\newcommand{\pb}{{\mathbf p}}
\newcommand{\qb}{{\mathbf q}}
\newcommand{\xb}{{\mathbf x}}
\renewcommand{\Re}{{\mathbb{R}}}
\newtheorem{definition}{Definition}
\newtheorem{remark}{Remark}
\newtheorem{theorem}{Theorem}
\newtheorem{corollary}{Corollary}
\title{Improving Noise Robustness in Subspace-based Joint Sparse Recovery}
\author{Jong Min Kim, Ok Kyun Lee and Jong Chul Ye}
\begin{document}
%
\maketitle
\begin{abstract}
\baselineskip 0.18in
In a multiple measurement vector  problem (MMV),  where multiple signals share a common sparse support and are sampled by a common sensing matrix,
we can expect joint sparsity to enable a further reduction in the number of required
measurements.  While a diversity gain from  joint sparsity had been demonstrated earlier in the case of  a convex relaxation method using an $l_1/l_2$ mixed norm penalty, 
only recently was it shown that similar diversity gain can be achieved by  greedy algorithms if we combine greedy steps with a MUSIC-like  subspace criterion. However,  the main limitation of these hybrid algorithms is that they often require  a
  large number of snapshots or a high signal-to-noise ratio (SNR)  for an accurate subspace as well as partial support estimation. 
One of the main contributions of this work is to show that  the noise robustness of  these algorithms  can be significantly improved by allowing sequential subspace estimation and  support filtering, even when the number of snapshots is insufficient.
Numerical simulations show that a novel sequential compressive MUSIC (sequential CS-MUSIC) that combines the sequential subspace estimation and support filtering steps significantly outperforms the existing greedy algorithms and  is quite comparable with computationally expensive state-of-art algorithms.
\end{abstract}
\begin{keywords}
Compressed sensing,  multiple measurement vector problems, subspace estimation, greedy algorithm
\end{keywords}

\IEEEpeerreviewmaketitle

\noindent{ 
\\

\noindent Correspondence to:\\
Jong Chul Ye,  Ph.D. ,
Associate Professor \\\
Dept. of Bio and Brain Engineering,  KAIST \\
373-1 Guseong-dong  Yuseong-gu, Daejon 305-701, Republic of Korea \\
Email: jong.ye@kaist.ac.kr \\
Tel: 82-42-350-4320 \\
Fax: 82-42-350-4310 \\
 }

\newpage
\baselineskip 0.29in

\section{Introduction}
\label{sec:intro}

We study a multiple measurement vector (MMV) problem,
where multiple signals share a same common sparse support set
and each signal is measured by multiplying it by a measurement
matrix.  
An MMV problem is one way in which multiple correlated signals can appear in a signal ensemble, and MMV problems also have many important applications \cite{KrVi96,PrWeScBo99,Joshi2006spi,Lee2011CDOT}.
A central theme in these studies has been that
joint sparsity within signal ensembles enables a further reduction in the number of required
measurements~\cite{BaWaDuSaBa05,Duarte2006IPSN},
where the number of measurements required per sensor must account for the
minimal features unique to that sensor~\cite{Kim2010CMUSIC, chen2006trs,cotter2005ssl,Mishali08rembo,Berg09jrmm,Lee2010SAMUSIC}.
Indeed,  for the case of an $l_1/l_2$ mixed norm approach, Obozinski {\em et al.} \cite{obozinski2011support} showed that a near optimal diversity gain can be achieved.

Recently,  Kim {\em et al.} \cite{Kim2010CMUSIC} and Lee {\em et al.} \cite{Lee2010SAMUSIC} independently showed that such a diversity gain can be also achieved in a new class of greedy algorithms by exploiting  the so-called   generalized (or extended) multiple signal classification (MUSIC) criterion \cite{Kim2010CMUSIC,Lee2010SAMUSIC}.  More specifically, these algorithms obtain a  partial support estimate using a conventional MMV greedy algorithm,  and then   the atoms corresponding to the partial supports are augmented into a data matrix to obtain an {augumented signal subspace} estimate. Finally,  a MUSIC-like \cite{Sc86}   criterion is derived for the  augmented subspace to find the remaining support. 
 The hybridization makes these hybrid greedy algorithms fully utilize a diversity gain so that the algorithms outperform all the existing greedy methods.

The performance improvement of these greedy algorithms is substantial and nearly achieves the $l_0$ bound when a signal subspace and partial support estimation  are  accurate due to a sufficient number of snapshots or  high signal to noise ratio (SNR) \cite{Kim2010CMUSIC,Lee2010SAMUSIC}. However,  if either of these estimation is erroneous owing to an insufficient number of snapshots or low SNR,   performance degrades. 
Similar observations have been made  in the literature on classical array signal processing \cite{linebarger1995incorporating,wirfalt2011prior}.  In \cite{linebarger1995incorporating},  prior knowledge of the direction-of-arrival (DOA) has been incorporated to improve the performance of MUSIC by filtering out the known sources via orthogonal projections. However,  as shown in \cite{wirfalt2011prior}, such orthogonal projection is suboptimal from a statistical standpoint.

While increasing the number of snapshots is relatively easier in classical sensor array signal processing problems,   in  some MMV problems such as parallel MR imaging \cite{PrWeScBo99},  an additional snapshot requires a hardware change by adding a new  receiver coil. Hence, in these problems, exploiting  other dimensions  would be beneficiall.
We are aware that joint sparse recovery methods such as  Bayesian approaches
\cite{wipf2007ebs,stoica2012LIKES}, or convex optimization techniques \cite{stoica2011spice}, are shown to be statistically robust in the direction of arrival estimation problems, as first demonstrated by Malitov {\em et al.} \cite{malioutov2005ssr} and further developed by Stoica {\em et al.} \cite{stoica2012LIKES}.   
However, these approaches are usually computationally expensive for  MMV problems with a large number of sensors, so we need a new greedy algorithm that achieves a similar optimal performance with a significantly reduced computational complexity.

Therefore, one of the main goals of this paper is to address how these hybrid greedy methods can be  made robust without
increasing the number of snapshots.
One important  contribution  is a  new theory explaining that the generalized MUSIC criterion is a special case of a new subspace criterion that can 
be used to derive  two  sequential strategies  to  improve the accuracy of an augmented signal subspace estimation. More specifically,  a forward greedy subspace estimation step improves the robustness of an augmented signal subspace estimation by adding newly discovered atoms in the MUSIC step, whereas the backward support filtering provides additional robustness by eliminating the inaccurate portion of support estimates.   
 By combining the two steps,  we develop a novel sequential CS-MUSIC algorithm that is robust, even with a limited number of snapshots. 
 Using theoretical noise analysis as well as 
numerical simulation, we show that the sequential CS-MUSIC is superior to the existing subspace-based greedy algorithms and exhibits  similar performance behavior to the mixed-norm  \cite{obozinski2011support,malioutov2005ssr} or Bayesian approaches \cite{wipf2007ebs} with a significantly reduced computationally complexity.

\section{Generalized Subspace Criterion}

\subsection{Notations and Mathematical Preliminaries}

Throughout the paper, $\xb^i$ and $\xb_j$ correspond to the $i$-th row and $j$-th column of matrix $X$. When $I$ is an index set, $X^I$, $A_I$ corresponds to a submatrix collecting corresponding rows of $X$ and columns of $A$, respectively. 
%
The rows (or columns) in $\mathbb{R}^n$ are in {\em general position} if any $n$ collection of rows (or columns) are linearly independent.

\begin{definition}[Canonical form noiseless MMV \cite{Kim2010CMUSIC}]\label{def:can}
 Suppose 
we are given a sensing matrix $A\in\mathbb{R}^{m\times n}$ and an
observation matrix $B\in \mathbb{R}^{m\times r}$ such that
$B=AX_{*}$ for some $X_{*}\in\mathbb{R}^{n\times r}$ and
$\|X_{*}\|=|{\rm supp}X|=k$,
where $m$, $n$, and $r$ are positive integers ($r\leq m<n$) that
represent the number of sensor elements, an ambient space
dimension, and the rank of an observation matrix, respectively.
 A canonical form noiseless multiple
measurement vector (MMV) problem is given as an estimation problem of
$k$-sparse vectors $X\in\mathbb{R}^{n\times r}$ using the following formula:
\begin{eqnarray}\label{eqdefcan_mmv}
{\rm minimize}~~~\|X\|_0\\
{\rm subject~to}~~~B=AX, \notag
\end{eqnarray}
where $\|X\|_0=|{\rm supp}X|$, ${\rm supp}X=\{1\leq i\leq n :
\mathbf{x}^i\neq 0\}$, $\mathbf{x}^i$ is the $i$-th row of $X$, and
the observation matrix $B$ is full rank, i.e. ${\rm rank}(B)=r\leq
k$.
\end{definition}

Recall that every MMV problem can be converted to a canonical form MMV by using a singular value decomposition and dimension reduction as described in \cite{Kim2010CMUSIC}. Hence, in this section, we assume that an MMV problem assumes the canonical form. However, this assumption will be relaxed later in noise analysis.

%
\subsection{Generalized Subspace Criterion}

Note that the generalized MUSIC criterion in \cite{Kim2010CMUSIC} requires $0\leq \delta_{2k-r+1}^L(A)<1$.
This implies that, if a sensing matrix is  obtained from a random Gaussian and if measurement is noiseless,  then we have the following minimal sampling condition \cite{Kim2010CMUSIC}:
\begin{equation*}
m\geq (1+\delta)(2k-r+1)~{\rm for~some}~\delta>0.
\end{equation*}
If we have  a redundant sampling $m\gg 2k-r+1$,  the following theorem can be used instead as the extension of  the generalized MUSIC criterion in  \cite{Kim2010CMUSIC}.

\begin{theorem} 
\label{thm:genmusic1}
Suppose $1\leq l\leq r$ and we have a canonical MMV model $AX=B$  with a sensing matrix  $A$ that
satisfies  an RIP condition with $0\leq\delta_{2k-r+l}^L(A)<1$. Furthermore, suppose the nonzero rows of $X$ are in general position. Then, for a given  index set $I\subset \{1,\cdots,n\}$ such that $|I|\leq \min(2(k-r)+l,k)$ and $|I\setminus {\rm supp}X|\leq k-r+l$,    the following statements are equivalent:
\begin{eqnarray}\label{rank-def}
({\rm i})&&|I\cap {\rm supp}X|\geq k-r+1;\notag\\
({\rm ii})&&{\rm rank}[A_I~B]<|I|+r .  
\end{eqnarray}
\end{theorem}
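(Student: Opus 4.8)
The plan is to translate the rank condition (ii) into a statement about the intersection of two column spaces, and then reduce that intersection to a linear-algebraic condition on the range of the nonzero part of $X$. Write $S:=\mathrm{supp}X$, so $|S|=k$; let $X^S$ denote the $k\times r$ matrix of nonzero rows of $X$, and let $R(\cdot)$ denote the range (column space). Note $B=AX=A_SX^S$, and since the nonzero rows are in general position with $\mathrm{rank}(B)=r$, we have $\mathrm{rank}(X^S)=r$, so $R(X^S)\subseteq\mathbb{R}^S$ has dimension $r$ and $R(B)=A_SR(X^S)$ has dimension $r$.

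First I would establish that $A_I$ has full column rank, so that $\mathrm{rank}[A_I~B]=|I|+r-\dim\bigl(R(A_I)\cap R(B)\bigr)$. The key size estimate is $|I\cup S|=|I\setminus S|+|S|\le (k-r+l)+k=2k-r+l$, using the hypothesis $|I\setminus\mathrm{supp}X|\le k-r+l$. By the RIP assumption $\delta^{L}_{2k-r+l}(A)<1$ the columns $A_{I\cup S}$ are linearly independent; in particular so are the columns of $A_I$ and of $A_S$. Consequently (ii), i.e.\ $\mathrm{rank}[A_I~B]<|I|+r$, is equivalent to $R(A_I)\cap R(B)\neq\{0\}$.

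The crucial step is to characterize this intersection. A nonzero $v\in R(A_I)\cap R(B)$ can be written as $v=A_Ic=A_Sw$ with $w=X^Sd\in R(X^S)$. Splitting the index sets via $J:=I\cap S$, $I':=I\setminus S$, $S':=S\setminus I$, the equation $A_Ic=A_Sw$ becomes a linear dependence among the columns $A_{I\cup S}$; since those columns are independent, every coefficient vanishes, forcing $w_{S'}=0$ and $c_{I'}=0$. Conversely, any nonzero $w\in R(X^S)$ supported on $J$ yields $v=A_Jw_J\in R(A_I)\cap R(B)$ (as $J\subseteq I$), nonzero because $A_S$ is injective. Thus the intersection is nontrivial if and only if $R(X^S)$ contains a nonzero vector supported on $J=I\cap S$. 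With this reduction, (i)$\Rightarrow$(ii) follows from dimension counting: if $|I\cap S|\ge k-r+1$ then $\dim\bigl(R(X^S)\cap\mathbb{R}^J\bigr)\ge r+|J|-k\ge 1$. For (ii)$\Rightarrow$(i) I argue the contrapositive: if $|I\cap S|\le k-r$, then $|S\setminus J|\ge r$, and a vector $w=X^Sd$ supported on $J$ must satisfy $\xb^i\cdot d=0$ for the $\ge r$ indices $i\in S\setminus J$; general position makes some $r$ of these rows $\xb^i$ a basis of $\mathbb{R}^r$, forcing $d=0$ and hence $w=0$.

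The main obstacle is the characterization step, passing from the abstract statement $R(A_I)\cap R(B)\neq\{0\}$ to the concrete support condition on $R(X^S)$. Everything hinges on the linear independence of the full column block $A_{I\cup S}$, which is precisely why the hypothesis controls $|I\setminus\mathrm{supp}X|$ rather than merely $|I|$: it is this bound that keeps $|I\cup S|$ within the RIP order $2k-r+l$. Once independence is secured, the coefficient-matching is routine, and the closing dimension count and general-position argument are standard.
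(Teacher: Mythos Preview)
Your proof is correct and follows essentially the same route as the paper. Both arguments hinge on the same two ingredients: (a) the RIP bound $\delta^L_{2k-r+l}(A)<1$ together with $|I\setminus S|\le k-r+l$ forces the columns of $A_{I\cup S}$ to be linearly independent, and (b) general position of the nonzero rows of $X$ gives $\|X^Sd\|_0\ge k-r+1$ for every nonzero $d$. The only difference is packaging: the paper works directly with a nontrivial relation $A_I\pb+B\qb=0$, embeds it in $\mathbb{R}^n$, and invokes RIP to obtain $\tilde\pb+X\qb=0$; you instead first isolate the equivalence $\mathrm{rank}[A_I~B]<|I|+r \Longleftrightarrow R(X^S)\cap\mathbb{R}^{I\cap S}\neq\{0\}$ and then run the dimension count and general-position argument on that reduced statement. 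These are the same computation viewed through the isomorphism $A_S:\mathbb{R}^k\to R(A_S)$.
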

\begin{proof}
(i)$\Longleftrightarrow$(ii): Assume that $|I\cap {\rm supp}X|\geq k-r+1$. Then $|I|+r\leq  2k-r+l\leq m$ since $\delta_{2k-r+l}^L(A)<1$. If we take $\tilde{I}\subset (I\cap {\rm supp}X)$ such that $|\tilde{I}|=k-r+1$, then
$${\rm dim}(R[A_{\tilde{I}}~B])\leq {\rm dim}(A_S)=k.$$
However, $|\tilde{I}|+r=k+1$ so that $[A_{\tilde{I}}~B]$ is not of full column rank. Hence $[A_I~B]$ is not also of full column rank since $\tilde{I}\subset I$.\\
Conversely, if we assume \eqref{rank-def}, there are $\pb\in\mathbb{R}^{|I|}$ and $\qb\in\mathbb{R}^r$ such that $A_I\pb+AX\qb={\bf 0}$ and $[\pb, \qb]^T\neq {\bf 0}$. If we let
$\tilde{\pb}\in\mathbb{R}^n$ by $\tilde{\pb}^I=\pb$ and $\tilde{\pb}^{I^c}={\bf 0}$,  we have
$A[\tilde{\pb}+X\qb]={\bf 0}.$ Since $\|\tilde{\pb}+X\qb\|_0\leq |I\setminus{\rm supp}X|+|{\rm supp}X|\leq 2k-r+l$, by the RIP condition, we have $\tilde{\pb}+X\qb={\bf 0}$ so that ${\rm supp}(\tilde{\pb})={\rm supp}(X\qb)\subset {\rm supp}X.$ If we assume that $|I\cap{\rm supp}X|\leq k-r$, then $\|\tilde{\pb}\|_0\leq k-r$ but $\|X\qb\|_0\geq k-r+1$ since the nonzero rows of $X$ are in general position. This is impossible so that $|I\cap {\rm supp}X|\geq k-r+1$.
\end{proof}

\bigskip

Note that  the conditions $|I|\leq \min(2(k-r)+l,k)$ and $|I\setminus {\rm supp}X|\leq k-r+l$ in Theorem~\ref{thm:genmusic1} do not  imply that there is a unique  index set $I$; rather, Theorem \ref{thm:genmusic1} says that multiple index sets $I$ can exist for a given $l$. For example, if $l=1$,
any index set $I$ such that $|I|=k-r+1,\cdots, \min(2(k-r)+1,k)$ that satisfies the condition $|I\setminus {\rm supp}X|\leq k-r+l$, can be used to test conditions (i)-(ii) in Theorem \ref{thm:genmusic1}. Furthermore, if we choose $|I|=k-r+1$,  Theorem~\ref{thm:genmusic1} is reduced to the following generalized MUSIC criterion  in \cite{Kim2010CMUSIC}.
%
\begin{corollary}[Generalized MUSIC Criterion \cite{Kim2010CMUSIC}]
Suppose we have a canonical MMV model $AX=B$  with a sensing matrix  $A$ that
satisfies  an RIP condition with $0\leq\delta_{2k-r+1}^L(A)<1$. Furthermore, suppose the nonzero rows of $X$ are in general position. 
Then, for $I_{k-r}\subset {\rm supp}
X$ with $|I_{k-r}|=k-r$ and any $j\in\{1,\cdots,n\}\setminus I_{k-r} $, we have $j\in {\rm supp}X$ if and only if
\begin{equation}\label{eq:gemusic}
{\rm rank}[A_{I_{k-r}\cup\{j\}}~B]<k+1
\end{equation}
or equivalently 
\begin{equation*}
\ab_j^{*}P_{R([A_{I_{k-r}}~B])}^{\perp}\ab_j=0.
\end{equation*}
\end{corollary}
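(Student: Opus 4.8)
The plan is to obtain the corollary as the special case $l=1$ of Theorem~\ref{thm:genmusic1}, applied to the index set $I=I_{k-r}\cup\{j\}$, and then to recast the rank condition (ii) into the projection form. First I would check that the hypotheses of Theorem~\ref{thm:genmusic1} hold for this choice. With $l=1$ the RIP assumption $0\leq\delta_{2k-r+1}^L(A)<1$ is exactly the one assumed here. Since $I_{k-r}\subset{\rm supp}X$ and $j\notin I_{k-r}$, we have $|I|=k-r+1$, and $|I|\le\min(2(k-r)+1,k)$ because $r\ge 1$ and $r\le k$; moreover $I\setminus{\rm supp}X\subseteq\{j\}$, so $|I\setminus{\rm supp}X|\le 1\le k-r+1$. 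Hence Theorem~\ref{thm:genmusic1} applies and (i) $\Longleftrightarrow$ (ii) for this $I$.

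Next I would identify each condition concretely. Because $I_{k-r}\subset{\rm supp}X$ with $|I_{k-r}|=k-r$ and $j\notin I_{k-r}$, the set $I\cap{\rm supp}X$ equals $I_{k-r}\cup(\{j\}\cap{\rm supp}X)$, so $|I\cap{\rm supp}X|\ge k-r+1$ holds precisely when $j\in{\rm supp}X$. On the other side, $|I|+r=(k-r+1)+r=k+1$, so condition (ii) reads ${\rm rank}[A_{I_{k-r}\cup\{j\}}~B]<k+1$. Combining these two observations with the equivalence from Theorem~\ref{thm:genmusic1} yields the first stated claim: $j\in{\rm supp}X$ if and only if ${\rm rank}[A_{I_{k-r}\cup\{j\}}~B]<k+1$.

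The key step, which I expect to be the main obstacle, is to establish that ${\rm rank}[A_{I_{k-r}}~B]=k$, so that appending the single column $\ab_j$ can only leave the rank at $k$ or raise it to $k+1$. Writing $S={\rm supp}X$ and $J=S\setminus I_{k-r}$ with $|J|=r$, I would show $R(A_{I_{k-r}})\cap R(B)=\{{\bf 0}\}$. Suppose $A_{I_{k-r}}\pb=B\qb=A_SX^S\qb$; since $A_S$ has full column rank by the RIP (as $k\le 2k-r+1$), matching the coefficients on $A_S$ forces $(X^S\qb)|_J={\bf 0}$, i.e. every row of $X$ indexed by $J$ is orthogonal to $\qb$. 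These are $r$ vectors in $\Re^r$ in general position, hence linearly independent, so they can all be orthogonal to $\qb$ only if $\qb={\bf 0}$, whence $\pb={\bf 0}$ as well. Thus the two ranges meet only at the origin, and together with ${\rm rank}(A_{I_{k-r}})=k-r$ and ${\rm rank}(B)=r$ this gives ${\rm rank}[A_{I_{k-r}}~B]=k$.

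Finally I would convert the rank condition into the orthogonality form. Since $[A_{I_{k-r}\cup\{j\}}~B]$ has the same column space as $[A_{I_{k-r}}~\ab_j~B]$, the rank computation shows that ${\rm rank}[A_{I_{k-r}\cup\{j\}}~B]<k+1$ holds exactly when $\ab_j\in R([A_{I_{k-r}}~B])$, that is, when $P_{R([A_{I_{k-r}}~B])}^{\perp}\ab_j={\bf 0}$. As the orthogonal projector is Hermitian and idempotent, $\ab_j^{*}P_{R([A_{I_{k-r}}~B])}^{\perp}\ab_j=\|P_{R([A_{I_{k-r}}~B])}^{\perp}\ab_j\|^2$, which vanishes if and only if $P_{R([A_{I_{k-r}}~B])}^{\perp}\ab_j={\bf 0}$. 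Chaining the equivalences then delivers the stated MUSIC-type criterion.
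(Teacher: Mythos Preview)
Your proof is correct and follows essentially the same route as the paper: apply Theorem~\ref{thm:genmusic1} with $l=1$ to $I=I_{k-r}\cup\{j\}$, identify condition (i) with $j\in{\rm supp}X$ and condition (ii) with ${\rm rank}[A_{I_{k-r}\cup\{j\}}~B]<k+1$, and then pass to the projection form. The only difference is that you supply an explicit direct argument (via general position and the full column rank of $A_S$) for ${\rm rank}[A_{I_{k-r}}~B]=k$, whereas the paper simply asserts the equivalence $\ab_j\in R([A_{I_{k-r}}~B])$; your argument is fine, and alternatively the same fact follows immediately from Theorem~\ref{thm:genmusic1} applied to $I_{k-r}$ itself (condition (i) fails, hence (ii) fails, so the rank equals its column count $k$).
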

\begin{proof}
For an index set $I$ such that $|I|=k-r+1$, the condition $|I\setminus {\rm supp}X| \leq k-r+1$ always holds. Therefore,  $|I\cap {\rm supp}X|= k-r+1$ is equivalent to $I \subset {\rm supp}X$. Hence, for  $I = I_{k-r} \cup \{ j\}$ such that  $I_{k-r}\subset {\rm supp}X$ and $|I_{k-r}|=k-r$,  Eq.~\eqref{rank-def} is equivalent to  Eq.~\eqref{eq:gemusic}, which is equivalent to say $\ab_j \in R([A_{I_{k-r}} B])$ or $\ab_j^*P^\perp_{R([A_{I_{k-r}} B])}\ab_j = 0$. This concludes the proof.
\end{proof}

\begin{remark}
If $r=k$,   the  conventional MUSIC criterion can be trivially derived.
\end{remark}
\begin{remark}
The subspace $R([A_{I_{k-r}}~B])$ is called {\em augmented signal subspace}. This name was first coined in \cite{Lee2010SAMUSIC}.
\end{remark}

\noindent So far, we have shown that Theorem~\ref{thm:genmusic1} can reproduce the existing results. However, one of the important byproducts of the theorem is the    following form, which will be used extensively in the following sections.
\begin{corollary}\label{prop-rank}
Suppose $1\leq l\leq r$ and we have a canonical MMV model $AX=B$  with a sensing matrix  $A$ that
satisfies  an RIP condition with $0\leq\delta_{2k-r+l}^L(A)<1$. Furthermore, suppose the nonzero rows of $X$ are in general position. 
Then, for an index set $I\subset \{1,\cdots,n\}$ such that $|I|\leq \min(2(k-r)+l,k)$ and $|I\setminus {\rm supp}X|\leq k-r+l$,  if we have
$|I\cap {\rm supp}X|=k-r+q$ for some $q\geq 0$, then
\begin{equation}\label{rank-def2}
{\rm rank}[A_I~B]=|I|+r-q.
\end{equation}
\end{corollary}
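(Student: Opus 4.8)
The plan is to show that the null space of the $(|I|+r)$-column matrix $[A_I~B]$ has dimension exactly $q$; the rank formula \eqref{rank-def2} then follows at once from rank--nullity. Writing $S={\rm supp}X$ (so $|S|=k$), a null vector of $[A_I~B]$ is a pair $(\pb,\qb)\in\mathbb{R}^{|I|}\times\mathbb{R}^r$ with $A_I\pb+AX\qb={\bf 0}$. Extending $\pb$ to $\tilde{\pb}\in\mathbb{R}^n$ by zeros off $I$ gives $A[\tilde{\pb}+X\qb]={\bf 0}$, exactly as in the proof of Theorem~\ref{thm:genmusic1}. The support of $\tilde{\pb}+X\qb$ lies in $(I\setminus S)\cup S$, whose size is at most $|I\setminus S|+|S|\le(k-r+l)+k=2k-r+l$, so the RIP hypothesis $\delta_{2k-r+l}^L(A)<1$ forces $\tilde{\pb}+X\qb={\bf 0}$.

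This identity is the heart of the argument: it says $\tilde{\pb}=-X\qb$, so $\pb$ is completely determined by $\qb$, and the only remaining constraint on $\qb$ is that $X\qb$ be supported inside $I$. Since $X\qb$ is automatically supported on $S$, this amounts to $(X\qb)^i=0$ for every $i\in S\setminus I$, i.e.\ $\qb\in\ker X^{S\setminus I}$, where $X^{S\setminus I}$ collects the rows of $X$ indexed by $S\setminus I$. I would then verify that the coordinate map $(\pb,\qb)\mapsto\qb$ is a linear isomorphism from the null space of $[A_I~B]$ onto $\ker X^{S\setminus I}$: it is injective because $\qb={\bf 0}$ forces $\tilde{\pb}={\bf 0}$ hence $\pb={\bf 0}$, and surjective because any $\qb\in\ker X^{S\setminus I}$ makes $X\qb$ supported on $S\cap I\subset I$, so $\tilde{\pb}=-X\qb$ is an admissible zero-extension and $(\tilde{\pb}^I,\qb)$ lies in the null space. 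Consequently the nullity of $[A_I~B]$ equals $\dim\ker X^{S\setminus I}$.

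It then remains to count this dimension, and here the general-position hypothesis does the work. From $|I\cap S|=k-r+q$ we get $|S\setminus I|=k-(k-r+q)=r-q$, so $X^{S\setminus I}$ is an $(r-q)\times r$ matrix whose rows are nonzero rows of $X$ (they lie in $S$). Because $r-q\le r$ and the nonzero rows of $X$ are in general position in $\mathbb{R}^r$, these $r-q$ rows are linearly independent, so $X^{S\setminus I}$ has full row rank $r-q$ and $\dim\ker X^{S\setminus I}=r-(r-q)=q$. Combining, the null space of $[A_I~B]$ has dimension $q$, which yields ${\rm rank}[A_I~B]=(|I|+r)-q=|I|+r-q$.

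The one step I expect to be delicate is the reduction in the second paragraph: one must argue cleanly that the null space is parametrized by $\qb$ alone, so the nullity is $\dim\ker X^{S\setminus I}$ and nothing larger. This, together with the bookkeeping $|S\setminus I|=r-q\le r$ that licenses the general-position count, relies essentially on the stated size constraints $|I|\le\min(2(k-r)+l,k)$ and $|I\setminus S|\le k-r+l$. I would therefore check these inequalities explicitly---in particular that they keep the relevant support size at most $2k-r+l$ and keep $q$ in the range $0\le q\le r$---before invoking the RIP and general-position conditions.
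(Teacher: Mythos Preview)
Your argument is correct and complete: the null space of $[A_I~B]$ is indeed isomorphic to $\ker X^{S\setminus I}$ via the projection $(\pb,\qb)\mapsto\qb$, and the general-position hypothesis gives $\dim\ker X^{S\setminus I}=q$ exactly as you compute. The bookkeeping $0\le q\le r$ follows from $|I\cap S|\le|I|\le k$, and the RIP step needs only $|I\setminus S|\le k-r+l$, so all hypotheses are used as you indicate.

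The paper's proof takes a different, more modular route. Rather than computing the nullity directly, it splits off a subset $I_{k-r}\subset I\cap S$ of size $k-r$, sets $J_{k-r}=(I\cap S)\setminus I_{k-r}$ (the $q$ ``extra'' correct indices), and applies Theorem~\ref{thm:genmusic1} to $I\setminus J_{k-r}$: since $|(I\setminus J_{k-r})\cap S|=k-r$, condition~(i) fails, so $[A_{I\setminus J_{k-r}}~B]$ has full column rank $|I|-q+r$. It then observes that the columns $A_{J_{k-r}}$ already lie in $R([A_{I_{k-r}}~B])=R(A_S)$, so adjoining them does not increase the rank. Your approach is more self-contained---it reopens the RIP argument from Theorem~\ref{thm:genmusic1} rather than invoking the theorem as a black box---and it yields the extra structural information that the null space is naturally parametrized by $\ker X^{S\setminus I}$. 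The paper's approach, by contrast, keeps the corollary short by leaning on the equivalence already proved.
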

\begin{proof}
Take an $I_{k-r}\subset (I\cap {\rm supp}X)$ with $|I_{k-r}|=k-r$ and let $J_{k-r}=(I\cap {\rm supp}X)\setminus I_{k-r}$, where $|J_{k-r}|=q$. Then by Theorem \ref{thm:genmusic1}, we have
$${\rm rank}[A_{I\setminus J_{k-r}}~B]=|I\setminus J_{k-r}|+r=|I|+r-q.$$
Then for any $j\in J_{k-r}$, $\ab_j\in R([A_{I_{k-r}}~B])=R(A_S)$ so that
$${\rm rank}[A_I~B]={\rm rank}[A_{I\setminus J_{k-r}}~B]=|I|+r-q$$
since $R([A_I~B])=R([A_{I\setminus J_{k-r}}~B]).$
\end{proof}

\section{Sequential Compressive MUSIC Algorithm}

By employing the results in  the previous section, this section first develops  forward or backward greedy steps. Then,  by combining the two approaches,  we can derive a  novel {\em sequential CS-MUSIC} algorithm.

\subsection{Forward Greedy: Sequential Subspace Estimation}

In \cite{Kim2010CMUSIC}, the CS-MUSIC first determines $k-r$ indices of supp$X$ with CS-based algorithms such as 2-thresholding or S-OMP, and then it recovers the  remaining $r$ indices of supp$X$ using the  generalized MUSIC criterion. For this,  a projection operator onto the noise subspace is calculated  as the orthogonal complement of  the augmented signal subspace $R([A_{I_{k-r}}~B])$. However,  the following result can further extend the existing generalized MUSIC criterion  \cite{Kim2010CMUSIC} .  


\begin{theorem}\label{thm-gmusic-add}
Suppose $1\leq l\leq r$ and we have a canonical MMV model $AX=B$  with a sensing matrix  $A$ that
satisfies  an RIP condition with $0\leq\delta_{2k-r+l}^L(A)<1$. Furthermore, suppose  the nonzero rows of $X$ are in general position. 
%
Then, if we have an index set $I\subset \{1,\cdots,n\}$ such that $|I|\leq \min(2(k-r)+l-1,k-1)$, $|I\setminus{\rm supp}X|\leq k-r+l-1$ and $|I\cap {\rm supp}X|\geq k-r$, we have for $j\notin I$,
$j\in {\rm supp}X$ if and only if
\begin{equation}\label{forward-rank}
{\rm rank}[A_{I}~B]={\rm rank}[A_{I \cup \{j\}}~B]
\end{equation}
or equivalently 
\begin{equation}\label{forward-back2}
\ab_j^{*}P_{R([A_I~B])}^{\perp}\ab_j=0.
\end{equation}
\end{theorem}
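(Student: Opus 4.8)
The plan is to reduce the whole statement to the rank formula of Corollary~\ref{prop-rank}, applied once to $I$ and once to $I\cup\{j\}$, and then simply compare the two ranks. First I would dispose of the equivalence between \eqref{forward-rank} and \eqref{forward-back2}: since $P_{R([A_I~B])}^{\perp}$ is an orthogonal projection (idempotent and self-adjoint), we have $\ab_j^{*}P_{R([A_I~B])}^{\perp}\ab_j=\|P_{R([A_I~B])}^{\perp}\ab_j\|^2$, so \eqref{forward-back2} holds if and only if $\ab_j\in R([A_I~B])$, which is precisely the statement that adjoining the column $\ab_j$ does not raise the rank, i.e. \eqref{forward-rank}. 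Thus it suffices to work with the rank condition \eqref{forward-rank}.

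Next I would write $|I\cap{\rm supp}X|=k-r+q$ with $q\geq 0$, which is legitimate by the hypothesis $|I\cap{\rm supp}X|\geq k-r$. The bounds $|I|\leq\min(2(k-r)+l-1,k-1)<\min(2(k-r)+l,k)$ and $|I\setminus{\rm supp}X|\leq k-r+l-1\leq k-r+l$ place $I$ within the scope of Corollary~\ref{prop-rank}, which gives ${\rm rank}[A_I~B]=|I|+r-q$. The heart of the argument is then a two-case analysis on the membership of $j$, where in each case I must re-verify that $I\cup\{j\}$ still satisfies the hypotheses of Corollary~\ref{prop-rank}. If $j\in{\rm supp}X$, then $(I\cup\{j\})\setminus{\rm supp}X=I\setminus{\rm supp}X$ so the off-support bound is preserved, $|I\cup\{j\}|=|I|+1\leq\min(2(k-r)+l,k)$, and $|(I\cup\{j\})\cap{\rm supp}X|=k-r+(q+1)$; Corollary~\ref{prop-rank} yields ${\rm rank}[A_{I\cup\{j\}}~B]=(|I|+1)+r-(q+1)=|I|+r-q$, equal to ${\rm rank}[A_I~B]$, so \eqref{forward-rank} holds. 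If instead $j\notin{\rm supp}X$, then $|(I\cup\{j\})\setminus{\rm supp}X|=|I\setminus{\rm supp}X|+1\leq k-r+l$ while $|(I\cup\{j\})\cap{\rm supp}X|=k-r+q$ is unchanged; Corollary~\ref{prop-rank} now gives ${\rm rank}[A_{I\cup\{j\}}~B]=(|I|+1)+r-q={\rm rank}[A_I~B]+1$, so the rank strictly increases and \eqref{forward-rank} fails. Combining the ``$j\in{\rm supp}X\Rightarrow$ equal rank'' branch with the contrapositive of the ``$j\notin{\rm supp}X\Rightarrow$ larger rank'' branch delivers the claimed equivalence.

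The step I expect to be the delicate one is exactly this bookkeeping that keeps $I\cup\{j\}$ admissible for Corollary~\ref{prop-rank}. The ``$-1$'' slack built into the hypotheses, namely $|I|\leq\min(2(k-r)+l-1,k-1)$ and $|I\setminus{\rm supp}X|\leq k-r+l-1$, is precisely what guarantees that inserting a single index $j$---whether it lands in the support or not---cannot push the cardinality past $\min(2(k-r)+l,k)$ nor the off-support count past $k-r+l$. Checking that this margin survives in both branches is the only place where the inequalities are tight, and it is what makes the forward greedy step legitimate.
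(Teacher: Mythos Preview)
Your proposal is correct and follows essentially the same route as the paper: apply Corollary~\ref{prop-rank} to $I$ and to $I\cup\{j\}$, compare the resulting rank formulas in the two cases $j\in{\rm supp}X$ and $j\notin{\rm supp}X$, and read off the equivalence with \eqref{forward-back2} from the fact that adjoining $\ab_j$ preserves the rank exactly when $\ab_j\in R([A_I~B])$. Your explicit bookkeeping of the ``$-1$'' slack in the hypotheses is slightly more detailed than the paper's, but the argument is the same.
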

\begin{proof}
By the condition we have $|I\cup\{j\}|\leq \min(2(k-r)+l,k)$ and $|(I\cup\{j\})\setminus {\rm supp}X|\leq k-r+l$ so that we can apply Corollary \ref{prop-rank} for $I$ and $I\cup \{j\}$ since $|I\cap {\rm supp}X|\geq k-r$.
If $|I\cap {\rm supp}X|=k-r+q$ for some $q\geq 0$, then we have ${\rm rank}[A_I~B]=|I|+r-q$.
Then for any $j\notin I$, if we have $j\in {\rm supp}X$, $|(I\cup\{j\})\cap {\rm supp}X|=k-r+(q+1)$ so that we have
$${\rm rank}[A_{I\cup \{j\}}~B]=|I|+1+r-(q+1)=|I|+r-q={\rm rank}[A_I~B].$$
On the other hand, if we have $j\notin {\rm supp}X$, $|(I\cup\{j\})\cap {\rm supp}X|=k-r+q$ so that we have
$${\rm rank}[A_{I\cup\{j\}}]=|I|+1+r-q>{\rm rank}[A_I~B].$$
Finally, \eqref{forward-rank} is equivalent to $\ab_j\in R([A_I~B])$, which is also equivalent to \eqref{forward-back2}. This completes the proof.
\end{proof}

\begin{remark}
Note that $R([A_I~B])=R([A_{I_{k-r}}~B])$ and ${\rm dim}R([A_{I_{k-r}}~B])=k$ for all $I\subset {\rm supp}X$ and $k-r\leq |I|\leq \min{(2(k-r)+l-1,k-1)}$. This implies that we first need to find $I_{k-r}$ support using a compressive sensing algorithm, then  we augment  newly added supports into the initial estimate $I_{k-r}$.
 As will be shown later in noise analysis, such  a greedy procedure improves the accuracy of the augmented signal subspace estimation. 
 \end{remark}
 \begin{remark}
The greedy procedure can even be performed in a critically sampled case, {\em i.e.} $l=1$.  In this case,  we can augment atoms up to $\min(2(k-r),k-1)$, which is always bigger than adding only $k-r$ atoms. However,  the number of possible augmentation increases with a redundant sampling, which makes the algorithm more robust.
 \end{remark}
 
 Theorem~\ref{thm-gmusic-add} leads us to the following sequential algorithm  ({\sf SeqSubspace}), as in Table I. Note that the algorithm can be combined with any joint sparse recovery algorithm that provides a $k-r$ initial support estimate. 
%

\begin{table}[!h]
 \center{\caption{}
 \begin{tabular}{l}
\toprule
Algorithm: $I$= {\sf SeqSubspace($A, B, I_{k-r}$)}\\
\midrule
~- Set $q=0$ and $I=I_{k-r}$. \\
~- While $q<r$, do the following procedure:\\
~~1. Perform an SVD of $[A_I~B]=[U_1,U_0]{\rm diag}[\Sigma_1,\Sigma_0][V_1,V_0]^{*}$, \\ ~~~~where $\Sigma_1={\rm diag}[\sigma_1,\cdots,\sigma_k]$
 and $\Sigma_0={\rm diag}[\sigma_{k+1},\cdots,\sigma_{k+q}]$  \\~~~~and $\sigma_1\geq\sigma_2\geq\cdots\geq\sigma_{k+q}$.\\
~~2. Take $j_q=\arg\min_{j\notin I}\|P_{R(U_1)}^{\perp}\ab_j\|^2.$\\
~~3. Set $I:=I\cup \{j_q\}$, let $q:=q+1$ and goto step 1. \\
~- Return $I$. \\
\bottomrule
\end{tabular}}
\end{table}

\subsection{Backward Greedy: Support Filtering}

As discussed before,  we can easily expect that the performance of the generalized MUSIC step is highly dependent on the selection of $k-r$ correct indices of the support of $X$.  Note that this is a very stringent condition.  In practice, even though the first consecutive steps of, for example, S-OMP, may not provide all true partial supports, it is more likely that among a $k$-sparse support estimate of S-OMP, part of the supports (not in sequential order) can be correct.
In fact,  an information theoretical analysis of a  partial support recovery condition in single measurement vector CS (SMV-CS) \cite{Reeves2008ISIT} showed that the required SNR condition of a partial support recovery is much more relaxed than that for  a full support recovery.
 Hence, if the estimate of the support of $X$ has at least $k-r$ indices of the support of $X$ and we can identify them, then we can expect that the performance of the compressive MUSIC will be improved. When ${k \choose k-r}$ is small, we may apply the exhaustive search, but if both $k-r$ and $r$ are not small, then the exhaustive search is hard to apply so that we have to find some alternative method to identify correct indices from an  estimate of ${\rm supp}X$.

Indeed, 
our new algorithm requires that
 $k-r+1$ supports (not in sequential order) out of  a larger support estimate is correct.  Then, the location of a  correct $k-r$ support  can be readily estimated using the following backward support filtering. Compared to a forward greedy procedure that improves the accuracy of the signal subspace estimation,  the backward support filtering criterion can  improve the accuracy of a partial support recovery, and, hence, the corresponding accuracy of an augmented signal subspace.
  
\begin{theorem}[Backward support filtering criterion]\label{thm-gmusic-sub}
Suppose $1\leq l\leq r$ and we have a canonical MMV model $AX=B$  with a sensing matrix  $A$ that
satisfies  an RIP condition with $0\leq\delta_{2k-r+l}^L(A)<1$. Furthermore, suppose the nonzero rows of $X$ are in general position. 
%
Then, if we have an index set $I\subset \{1,\cdots,n\}$ such that $|I|\leq \min(2(k-r)+l,k)$, $|I\setminus {\rm supp}X|\leq k-r+l$ and $|I\cap {\rm supp}X|\geq k-r+1$, then we have for $j\in I$,
$j\in {\rm supp}X$ if and only if
$${\rm rank}[A_{I\setminus\{j\}}~B] = {\rm rank}[A_I~B],$$
or equivalently 
$$\ab_j^{*}P_{R([A_{I\setminus\{j\}}~B])}^{\perp}\ab_j=0.$$
\end{theorem}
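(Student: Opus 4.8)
The plan is to mirror the forward argument of Theorem~\ref{thm-gmusic-add}, replacing the augmentation of a column by its deletion and invoking the rank formula of Corollary~\ref{prop-rank} for the two sets $I$ and $I\setminus\{j\}$. Since $j\in I$, the matrix $[A_I~B]$ is obtained from $[A_{I\setminus\{j\}}~B]$ by appending the single column $\ab_j$, so $R([A_I~B])=R([A_{I\setminus\{j\}}~B])+\mathrm{span}(\ab_j)$ and the two ranks differ by either $0$ or $1$; the whole task is to show that this difference is $0$ exactly when $j\in{\rm supp}X$.

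First I would fix the parametrization $|I\cap{\rm supp}X|=k-r+q$. The hypothesis $|I\cap{\rm supp}X|\geq k-r+1$ guarantees $q\geq 1$, and this is the crucial point distinguishing the backward step from a naive deletion: it ensures that after removing one index the intersection count stays nonnegative in the form $k-r+q'$ with $q'\geq 0$, so that Corollary~\ref{prop-rank} still applies to $I\setminus\{j\}$. Applying the corollary to $I$ directly gives ${\rm rank}[A_I~B]=|I|+r-q$.

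Next I would split into two cases. If $j\in{\rm supp}X$, then deleting $j$ lowers the intersection by one, $|(I\setminus\{j\})\cap{\rm supp}X|=k-r+(q-1)$, while $|(I\setminus\{j\})\setminus{\rm supp}X|=|I\setminus{\rm supp}X|$ is unchanged; the hypotheses of Corollary~\ref{prop-rank} persist for $I\setminus\{j\}$ (the cardinality and off-support bounds only relax), and the corollary yields ${\rm rank}[A_{I\setminus\{j\}}~B]=(|I|-1)+r-(q-1)=|I|+r-q$, equal to ${\rm rank}[A_I~B]$. If instead $j\notin{\rm supp}X$, the intersection is unchanged at $k-r+q$ while the off-support count drops by one, and the corollary gives ${\rm rank}[A_{I\setminus\{j\}}~B]=(|I|-1)+r-q={\rm rank}[A_I~B]-1$, a strict decrease. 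Comparing the two cases establishes the rank equivalence.

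Finally, the rank identity ${\rm rank}[A_{I\setminus\{j\}}~B]={\rm rank}[A_I~B]$ is equivalent to $\ab_j\in R([A_{I\setminus\{j\}}~B])$, hence to $P_{R([A_{I\setminus\{j\}}~B])}^{\perp}\ab_j={\bf 0}$ and so to $\ab_j^{*}P_{R([A_{I\setminus\{j\}}~B])}^{\perp}\ab_j=\|P_{R([A_{I\setminus\{j\}}~B])}^{\perp}\ab_j\|^2=0$. The main obstacle is not any hard estimate but the careful verification that $I\setminus\{j\}$ still satisfies all three hypotheses of Corollary~\ref{prop-rank} in both cases; the condition $|I\cap{\rm supp}X|\geq k-r+1$ is precisely what keeps $q-1\geq 0$, so that the rank formula remains valid after deletion.
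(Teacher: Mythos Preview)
Your proposal is correct and follows essentially the same approach as the paper's own proof: both apply Corollary~\ref{prop-rank} to $I$ and to $I\setminus\{j\}$, split on whether $j\in{\rm supp}X$, and compare the resulting rank values before translating the rank identity into the projection criterion. Your write-up is in fact slightly more careful than the paper's in explicitly checking that the hypotheses of Corollary~\ref{prop-rank} persist for $I\setminus\{j\}$ and that $q-1\geq 0$.
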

\begin{proof}
Assume that $|I\cap {\rm supp}X|=k-r+q$, where $q\geq 1$. Then, by Corollary \ref{prop-rank}, we have ${\rm rank}[A_I~B]=|I|+r-q$. Noting that $I\setminus \{j\}$ satisfies the assumptions of Corollary \ref{prop-rank} for any $j\in I$, if we have $j\in {\rm supp}X$, $|I\cap {\rm supp}X|=k-r+(q-1)$ so that we have
$${\rm rank}[A_{I\setminus\{j\}}~B]=|I|-1+r-(q-1)=|I|+r-q={\rm rank}[A_I~B].$$
On the other hand, if we have $j\notin {\rm supp}X$, $|I\cap {\rm supp}X|=k-r+q$ so that we have
$${\rm rank}[A_{I\setminus\{j\}}~B]=|I|-1+r-q=|I|+r-q-1<{\rm rank}[A_I~B].$$
Finally, due to the rank condition, we know $\ab_j\in R([A_{I\setminus\{j\}}~B])$ if and only if $j\in {\rm supp}X$. Hence $\ab_j^{*}P_{R([A_{I\setminus\{j\}}~B])}^{\perp}\ab_j=0$ if and only if $j\in {\rm supp}X$.
That completes the proof.
\end{proof}

Theorem~\ref{thm-gmusic-sub} informs  us that if we have a partial estimate of support of $X$ that has at least $k-r+1$ correct indices of support of $X$, we can identify the correct part of the estimated partial support of $X$ by using the backward support filtering criterion  as described in Table II.

\begin{table}[!h]
 \center{\caption{}
 \begin{tabular}{l}
\toprule
Algorithm: $I_{k-r}$= {\sf SupportFiltering($A, B, I$)}\\
\midrule
~- For all $j\in I$, calculate the quantities $\zeta(j):=\|P_{R([A_{I\setminus\{j\}}~B])}^{\perp}\ab_j\|^2$.
\\
~- Making an ascending ordering of $\zeta(j)$ for $j\in I$, choose indices that \\~~~ correspond to the first $k-r$ indices 
 and put these indices into $I_{k-r}$. \\
~-  Return $I_{k-r}$. \\
\bottomrule
\end{tabular}}
\end{table}

%
%
%

\begin{remark}
Due to the condition $|I|\leq \min(2(k-r)+l,k)$ in Theorem~\ref{thm-gmusic-sub},   we can include $k$-sparse support estimate $I$ in a support filtering step if $r<(k+l)/2$. Note that this is always true regardless of $r$ if $l=k$ or  $\delta^L_{2k}<1$.  
 However,   if $r\geq (k+l)/2$,  we can use the following heuristics. 
First, just include the first $2(k-r)+l$ support estimate of $I$ for a support filtering. Since, in most  greedy algorithms, the earlier greedy steps are more likely to succeed,  correct $k-r+1$ supports are more likely to be included.  Hence, we can filter out the remaining indices $j\in I$ such that $j \notin {\rm supp}X$.
\end{remark}

\subsection{Sequential CS-MUSIC}

By combining the forward and the backward greedy steps, this paper develops the following sequential CS-MUSIC algorithm decribed in Table III. Note that this algorithm  assumes that the sparsity level $k$  is given as  {\em a priori} knowledge.  (The estimation problem of an unknown $k$  will be discussed later.)

\begin{table}[!h]
 \center{\caption{}
 \begin{tabular}{l}
\toprule
Algorithm: $I_k$= {\sf SeqCSMUSIC($A, Y, k,r$)}\\
~~~~~Input: ~$k,r$, $A \in \Re^{m\times n}$ ,  $Y\in \Re^{m\times N}$\\
~~~~~Output: $k$ support estimate $I_{k}$\\
\midrule
~- Estimate the $k$ support estimate $I_k$ of supp$X$ using {\em any} MMV algorithm.\\
~- $U~~~:= $ Rank-$r$ signal subspace estimate of $R(Y)$.\\
~- $I_{k-r}:=${\sf SupportFiltering}($A,U, I_k$).\\
~-$~I_k~~~:=${\sf SeqSubspace}($A,U,I_{k-r}$).\\
~-  Return $I_k$. \\
\bottomrule
\end{tabular}}
\end{table}

\section{Noisy Performance Analysis of Sequential CS-MUSIC}
\label{sec:proposedMethod}

\subsection{Improving Noise Robustness Using Sequential Subspace Estimation}

In practice,  measurements are noisy, so the theory we have derived for noiseless measurements should be modified.
Suppose a noisy MMV model  is given by:
$$Y= AX+W,$$ 
where $Y\in \Re^{m\times N}$ are noisy measurements corrupted by an additive noise $W\in \Re^{m\times N}$, and $N$ denotes the number of snapshots.
Then, using singular value decomposition, we can find the  following  canonical  MMV problem:
$$\tilde S = A \tilde X + \tilde W, $$
where 
 $\tilde{S}\in\mathbb{R}^{m\times r}$ is the rank-$r$ signal subspace estimate  of $Y$ and $r\leq N $ denotes the numerical rank of $Y$. Due to the noise,
 $\tilde{S}$ is peturbed from the noiseless signal subspace $S$ such that $R(S)=R\left(A\tilde X\right)$,
%
%
%
%
%
 which leads to errors in the augmented signal subspace.
 The following theorem characterizes how much perturbation in an augmented signal subspace can be endured by a  generalized MUSIC step.
 
 \begin{theorem}\label{thm:gmusic}
\ For $0\leq l<r$, if we have $I_{k-r+l}\subset {\rm supp}X$ such that $|I_{k-r+l}|=k-r+l$ and singular value decomposition of   $[A_{I_{k-r}} ~~S]$ and $[A_{I_{k-r+l}}~\tilde S]$ as
$$[A_{I_{k-r}} ~~S]
= U_1\Sigma_1 V_1, \quad [A_{I_{k-r+l}}~\tilde S]=\tilde{U}_1\tilde{\Sigma}_1\tilde{V}_1^{*}+\tilde{U}_0\tilde{\Sigma}_0\tilde{V}_0^{*},$$
where $\tilde{\Sigma}_1={\rm diag}[\tilde{\sigma}_1,\cdots,\tilde{\sigma}_k]$, $\tilde{\Sigma}_0={\rm diag}[\tilde{\sigma}_{k+1},\cdots,\tilde{\sigma}_{k+l}]$ and $\tilde{\sigma}_1\geq \tilde{\sigma}_2\geq\cdots\geq\tilde{\sigma}_{k+l}$, 
then,  for any $j\notin {\rm supp}X$ and $q\in {\rm supp}X$ we have
\begin{eqnarray*}
\|P_{R(\tilde{U}_1)}^{\perp}\ab_j\|^2&>& \|P_{R(\tilde{U}_1)}^{\perp}\ab_q\|^2
\end{eqnarray*}
provided that
\begin{equation}\label{pert-aug}
\|P_{R(\tilde{U}_1)}-P_{R(U_1)}\|\ < 1-\gamma \  .
\end{equation}
and $m,n \rightarrow \infty$ and
 $\gamma=\lim_{n\rightarrow\infty}k/m<1$ . In other words, a generalized MUSIC step finds   correct supports if Eq.~\eqref{pert-aug} is satisfied.
\end{theorem}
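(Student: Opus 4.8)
The plan is to reduce the noisy ordering claim to a stable perturbation of the exact noiseless picture, in which the generalized MUSIC residual vanishes precisely on the true support. First I would fix the noiseless reference subspace. Because $I_{k-r}\subset{\rm supp}X$ with $|I_{k-r}|=k-r$ and $R(S)$ is the $r$-dimensional noiseless signal subspace contained in $R(A_{{\rm supp}X})$, the general-position hypothesis forces $R(U_1)=R([A_{I_{k-r}}~S])=R(A_{{\rm supp}X})$, a $k$-dimensional space; this is exactly the identity used in Corollary~\ref{prop-rank} together with the residual characterization of Theorem~\ref{thm-gmusic-add}. Hence in the noiseless case $P_{R(U_1)}^{\perp}\ab_q=0$ for every $q\in{\rm supp}X$, whereas $P_{R(U_1)}^{\perp}\ab_j\neq0$ for $j\notin{\rm supp}X$, so the two residual families are separated by a hard gap: zero on the support, strictly positive off it.

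Next I would make that gap quantitative in the stated asymptotic regime. The key observation is that for $j\notin{\rm supp}X$ the atom $\ab_j$ is statistically independent of the columns generating $R(A_{{\rm supp}X})$, so $\ab_j$ behaves like a random direction relative to a \emph{fixed} $k$-dimensional subspace of $\Re^{m}$. With unit-normalized columns, the squared length of the projection of a uniform unit vector onto a fixed $k$-plane has mean $k/m$ and concentrates sharply about it, so $\|P_{R(U_1)}\ab_j\|^{2}\to\gamma$ and $\|P_{R(U_1)}^{\perp}\ab_j\|^{2}\to 1-\gamma$, uniformly over the off-support indices $j$, as $m,n\to\infty$ with $k/m\to\gamma$. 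This limit is sharper than the deterministic RIP lower bound coming from $\delta_{2k-r+l}^{L}(A)$, and it is precisely what produces the constant $\gamma$ in \eqref{pert-aug}.

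I would then transfer both facts to $\tilde U_1$ through the single scalar $\epsilon:=\|P_{R(\tilde U_1)}-P_{R(U_1)}\|$. Since $P_{R(\tilde U_1)}^{\perp}-P_{R(U_1)}^{\perp}=P_{R(U_1)}-P_{R(\tilde U_1)}$ has operator norm $\epsilon$, for any unit atom $\xb$ one has $\big|\,\|P_{R(\tilde U_1)}^{\perp}\xb\|-\|P_{R(U_1)}^{\perp}\xb\|\,\big|\le\epsilon$. Applied to $q\in{\rm supp}X$, where the noiseless residual is $0$, this gives $\|P_{R(\tilde U_1)}^{\perp}\ab_q\|\le\epsilon$; applied to $j\notin{\rm supp}X$ it gives $\|P_{R(\tilde U_1)}^{\perp}\ab_j\|\ge\sqrt{1-\gamma}-\epsilon$. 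The target inequality $\|P_{R(\tilde U_1)}^{\perp}\ab_j\|^{2}>\|P_{R(\tilde U_1)}^{\perp}\ab_q\|^{2}$ then holds as soon as the off-support lower bound exceeds the on-support upper bound, i.e. once $\epsilon$ is below the separation margin set by $1-\gamma$; carrying both bounds through to the squared quantities recovers the threshold \eqref{pert-aug}.

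The main obstacle is the second step: upgrading the qualitative noiseless gap to the exact limit $1-\gamma$. This requires a genuine concentration argument for the projection of an independent atom onto a fixed $k$-plane, uniform control over all off-support indices, and a clean passage from the canonical noiseless RIP model of the earlier sections to the asymptotic sampling regime $\gamma=\lim k/m<1$; the independence of $\ab_j$ from $R(A_{{\rm supp}X})$ is what makes this tractable. The perturbation transfer and the final arithmetic are comparatively routine, the only delicate point being to keep the on-support and off-support bounds tight enough that the stated separation threshold, rather than a smaller one, emerges.
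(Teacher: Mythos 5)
Your overall strategy matches the paper's in outline: the augmented-subspace residual vanishes on the true support in the noiseless model, an independent off-support atom concentrates so that its residual energy against a $k$-dimensional subspace tends to $1-\gamma$, and the noisy case is handled through $\epsilon:=\|P_{R(\tilde U_1)}-P_{R(U_1)}\|$. But there is a genuine gap in your final step: the threshold you actually derive is not the stated one. Applying the norm-level triangle inequality to both families gives $\|P_{R(\tilde U_1)}^{\perp}\ab_q\|\le\epsilon\|\ab_q\|$ and $\|P_{R(\tilde U_1)}^{\perp}\ab_j\|\ge\sqrt{1-\gamma}-\epsilon$, and comparing these yields the sufficient condition $\epsilon<\sqrt{1-\gamma}/2$. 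This does not ``recover the threshold \eqref{pert-aug}'': $\sqrt{1-\gamma}/2$ and $1-\gamma$ coincide only at $\gamma=3/4$, and your condition is strictly more restrictive whenever $\gamma<3/4$, so the arithmetic asserted at the end of your third paragraph is wrong and the theorem with the constant $1-\gamma$ is not established by your route.

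The fix, which is what the paper does, is to work with quadratic forms and to let the perturbation enter only once, through the on-support term. Write $\|P_{R(\tilde{U}_1)}^{\perp}\ab_j\|^2-\|P_{R(\tilde{U}_1)}^{\perp}\ab_q\|^2=\ab_j^{*}P_{R(\tilde{U}_1)}^{\perp}\ab_j-\ab_q^{*}P_{R(U_1)}^{\perp}\ab_q+\ab_q^{*}\bigl[P_{R(U_1)}-P_{R(\tilde{U}_1)}\bigr]\ab_q$; the middle term is zero by the generalized MUSIC criterion and the last term is bounded below by $-\epsilon\|\ab_q\|^2$. Crucially, the off-support term $\ab_j^{*}P_{R(\tilde{U}_1)}^{\perp}\ab_j$ needs no perturbation transfer at all: since $\ab_j$ for $j\notin{\rm supp}X$ is independent of $\tilde S$ (hence of $\tilde U_1$), the quantity $m\,\ab_j^{*}P_{R(\tilde{U}_1)}^{\perp}\ab_j$ is chi-squared with $m-k$ degrees of freedom with respect to the \emph{perturbed} subspace, and Lemma~3 of \cite{Fletcher2009nscond} gives $\min_{j\notin{\rm supp}X}\ab_j^{*}P_{R(\tilde{U}_1)}^{\perp}\ab_j\,/\,\max_{q}\|\ab_q\|^2\rightarrow 1-\gamma$ in the large system limit. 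The sufficient condition then reads $1-\gamma-\epsilon>0$, which is exactly \eqref{pert-aug}. Your route loses the constant precisely because you measure the off-support residual against the noiseless subspace $R(U_1)$ and then pay the triangle-inequality cost $\epsilon$ a second time; if you insist on perturbing both terms you must at least keep the on-support bound at the quadratic level ($\|P_{R(\tilde U_1)}^{\perp}\ab_q\|^2\le\epsilon\|\ab_q\|^2$, or even $\epsilon^2\|\ab_q\|^2$), and still the off-support side must be evaluated directly against $R(\tilde U_1)$ to avoid the $\sqrt{1-\gamma}$ degradation.
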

\begin{proof}
Noting that $\|P_{R(U_1)}^{\perp}\ab_j\|^2=0$ for $j\in {\rm supp}X$ by the generalized MUSIC criterion, for any $j\notin {\rm supp}X$ and $q\in {\rm supp}X$, we have
\begin{eqnarray}\label{eq:diff}
&&\|P_{R(\tilde{U}_1)}^{\perp}\ab_j\|^2-\|P_{R(\tilde{U}_1)}^{\perp}\ab_q\|^2 
\notag\\
&=&\ab_j^{*}P_{R(\tilde{U}_1)}^{\perp}\ab_j-\ab_q^{*}P_{R(U_1)}^{\perp}\ab_q
-\ab_q^{*}\left[P_{R(\tilde{U}_1)}-P_{R(U_1)}\right]\ab_q \notag\\
&=&\ab_j^{*}P_{R(\tilde{U}_1)}^{\perp}\ab_j-\|P_{R(\tilde{U}_1)}-P_{R(U_1)}\|\|\ab_q\|^2.
\end{eqnarray}
Since $a_{i,j}$'s are i.i.d. normal distribution with zero mean and variance $1/m$ and $\ab_j$ is independent of $P_{R(\tilde{U}_1)}^{\perp}$ for any $j\notin {\rm supp}X$, $\ab_j^{*}P_{R(\tilde{U}_1)}^{\perp}\ab_j$ is a chi-squared random variable of degree of freedom $m-k$ since ${\rm rank}(\tilde{U}_1)=k$. Also, for each $1\leq j\leq n$, $m\|\ab_j\|^2$ is a chi-squared random variable with degree of freedom $m$ so that we have, by Lemma 3 in \cite{Fletcher2009nscond},
$\lim\limits_{n\rightarrow\infty}{\max\limits_{1\leq j\leq n}\|\ab_j\|^2}/{m}=1$
since $\lim_{n\rightarrow\infty}(\log{n})/m=0$.
Since $\lim_{n\rightarrow\infty}(\log{(n-k)})/(m-k)=0$, by Lemma 3 in \cite{Fletcher2009nscond}, we have
$\lim\limits_{n\rightarrow\infty}{\min\limits_{j\notin{\rm supp}X}m\ab_j^{*}P_{R(\tilde{U}_1)}^{\perp}\ab_j}/{(m-k)}=1$ so that
$\lim\limits_{n\rightarrow\infty}{\min\limits_{j\notin {\rm supp}X}\ab_j^{*}
P_{R(\tilde{U}_1)}^{\perp}\ab_j}/{\max\limits_{1\leq j\leq n}
\|\ab_j\|^2}=
1-\gamma.$
Hence, Eq.~\eqref{eq:diff} is positive provided that Eq.~\eqref{pert-aug} holds
in the large system limit. This completes the proof.
\end{proof}

Therefore, by minimizing the perturbation in the augmented signal subspace $\|P_{R(\tilde{U}_1)}-P_{R(U_1)}\|$, we can make the generalized MUSIC step more robust. 
Unfortunately, the  direct minimization of the perturbation of the subspace is not easy.  Instead,  we are interested in minimizing the following upper-bound of the perturbation, whose proof can be found in Appendix~A:
\begin{equation}\label{eq:upper}
\|P_{R(\tilde{U}_1)}-P_{R(U_1)}\|\leq \frac{\Delta}{\sigma_k([A_{I_{k-r+l}}~S])-\Delta}  \ ,
\end{equation}
where $\Delta = \|S-\tilde S \|$ and $\sigma_k([A_{I_{k-r+l}}~S])$ denotes the $k$-th largest singular value of $[A_{I_{k-r+l}}~S]$.
Then, we have the following theorem:
\begin{theorem}\label{thm:snr}
Let $l\geq 0$. For  $I_{k-r+l}\subset {\rm supp}X$ such that $|I_{k-r+l}|=k-r+l$,   the generalized MUSIC steps find the remaining $r-l$ support
provided that
\begin{equation}\label{snr-aug}
\frac{\sigma_k([A_{I_{k-r+l}}~S])}{\Delta}>1+\frac{1}{1-\gamma}
\end{equation}
and  $m,n \rightarrow \infty$ and
 $\gamma=\lim_{n\rightarrow\infty}k/m<1$.
\end{theorem}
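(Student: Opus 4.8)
The plan is to read Theorem~\ref{thm:snr} as a concrete, checkable restatement of the abstract perturbation hypothesis \eqref{pert-aug} of Theorem~\ref{thm:gmusic}. Theorem~\ref{thm:gmusic} already guarantees the strict separation $\|P_{R(\tilde U_1)}^{\perp}\ab_j\|^2>\|P_{R(\tilde U_1)}^{\perp}\ab_q\|^2$ of true supports from spurious ones---so that selecting the $r-l$ smallest functional values among $j\notin I_{k-r+l}$ recovers exactly the missing indices---whenever the augmented-subspace perturbation satisfies $\|P_{R(\tilde U_1)}-P_{R(U_1)}\|<1-\gamma$. The upper bound \eqref{eq:upper} controls this perturbation through the noise level $\Delta=\|S-\tilde S\|$ and the signal strength $\sigma_k([A_{I_{k-r+l}}~S])$. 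Thus the whole task reduces to showing that the SNR condition \eqref{snr-aug} forces the right-hand side of \eqref{eq:upper} below $1-\gamma$. (We may restrict to $l<r$, so that Theorem~\ref{thm:gmusic} applies verbatim; the boundary case $l=r$ is vacuous, since then there is no remaining support to recover.)

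First I would abbreviate $\sigma:=\sigma_k([A_{I_{k-r+l}}~S])$ and note that \eqref{snr-aug} already yields $\sigma/\Delta>1+\frac{1}{1-\gamma}>1$, hence $\sigma-\Delta>0$; this positivity check is crucial, as it keeps the denominator in \eqref{eq:upper} positive and the bound finite. Then I would carry out the one-line algebraic equivalence: writing $1+\frac{1}{1-\gamma}=\frac{2-\gamma}{1-\gamma}$ and cross-multiplying (legitimate because $\sigma-\Delta>0$ and $1-\gamma>0$), the condition $\sigma/\Delta>\frac{2-\gamma}{1-\gamma}$ is seen to be exactly equivalent to $\Delta/(\sigma-\Delta)<1-\gamma$.

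Chaining the two estimates then closes the argument: by \eqref{eq:upper} together with the equivalence just established, $\|P_{R(\tilde U_1)}-P_{R(U_1)}\|\le \Delta/(\sigma-\Delta)<1-\gamma$, which is precisely hypothesis \eqref{pert-aug}. Invoking Theorem~\ref{thm:gmusic} in the large-system limit ($m,n\to\infty$, $\gamma=\lim_{n\to\infty}k/m<1$) then delivers the strict separation for every $j\notin{\rm supp}X$ and $q\in{\rm supp}X$, so the $r-l$ remaining supports---the complement of $I_{k-r+l}$ in ${\rm supp}X$---are exactly the indices attaining the smallest values of the MUSIC functional. I expect no genuine obstacle here, since the substantive analysis has already been done in Appendix~A (the perturbation bound \eqref{eq:upper}) and in the chi-squared concentration estimates underlying Theorem~\ref{thm:gmusic}; the theorem is essentially a translation of those inputs. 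The only point demanding care is verifying $\sigma-\Delta>0$ before rearranging, because otherwise the bound \eqref{eq:upper} would be vacuous and the inequality manipulation would reverse sign.
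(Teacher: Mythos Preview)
Your proposal is correct and follows exactly the paper's own route: the paper's proof is a one-liner stating that the result follows ``trivially'' by plugging the perturbation bound \eqref{eq:upper} into the hypothesis \eqref{pert-aug} of Theorem~\ref{thm:gmusic}, and you have simply spelled out that substitution and the accompanying algebra in full. Your explicit verification that $\sigma-\Delta>0$ before rearranging, and your remark that the case $l=r$ is vacuous, are details the paper omits but that make the argument more careful.
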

\begin{proof}
This can be trivially proven by pugging  Eq.~\eqref{eq:upper} in  the inequality \eqref{pert-aug}.
\end{proof}

 Note that for $I_{k-r+l}\subset {\rm supp}X$, we have ${\rm rank}([A_{I_{k-r+l}}~S])=k$ so that the set of columns of $[A_{I_{k-r+l}}~S]$ is a frame in $R(A_{{\rm supp}X})$ with lower frame bound $\sigma_k^2([A_{k-r+l}~S])$. In this case,  as shown in Fig.~\ref{fig:snr_seq},  $\sigma_k([A_{I_{k-r+l}}~B])$ is an increasing function of $l$, so as $l$ increases, the frame becomes more redundant and the lower frame bound become larger. Hence,  the left side of Eq.~\eqref{snr-aug} becomes larger.
 \begin{figure}[htbp]
\centering\includegraphics[width=8cm]{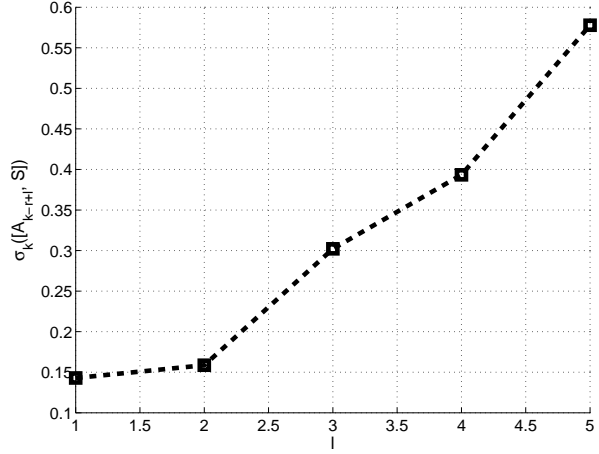}
\caption{$\sigma_k\left([A_{k-r+l}, S]\right)$ values  with increasing $l$. Simulation parameters are $n=128, k=8, r=6$. 
}
\label{fig:snr_seq}
\end{figure}

    This observation provides us an important error correction scheme. 
    Note that the SNR condition Eq.~\eqref{snr-aug}  is still the same even if we find a support  index $j_l$  in a greedy manner as follows:
 \begin{equation}\label{supp-id}
j_l=\arg\min\limits_{j\notin I_{k-r+l}}\|P_{R(\tilde{U}_1)}^{\perp}\ab_j\|^2 \ .
\end{equation} 
However, as SNR condition Eq.~\eqref{snr-aug}  is a sufficient condition, a non-zero probability of $j_l $ being in 
the true support    exists  even though Eq.~\eqref{snr-aug} is not satisfied. (This is especially true if we select only one index rather than choosing all $r-l$ indices).   If  a {\em correctly} found index $j_l$ is augmented for the next step of sequential subspace estimation, then it is more likely that the condition in Eq.~\eqref{snr-aug} can be satisfied in the following greedy steps since the left side term of  Eq.~\eqref{snr-aug} is an increasing function of $l$ thanks to the inclusion of a correct index $j_l$.  As soon as an SNR condition is satisfied,  the remaining greedy steps will succeed since the condition is sufficient.  Therefore, even when a sufficient SNR condition is not satisfied initially, the proposed sequential subspace estimation  technique exploits the possibility of finding a correct index to improve the noise robustness,  which was not possible in an  original MUSIC step.

\subsection{Improving Noise Robustness Using  Support Filtering}

Using similar techniques, we can derive the following sufficient condition for the success of  support filtering.

\begin{theorem}\label{thm:snr-back}
Let $m\geq k+r$. Suppose we have an index set $I\subset \{1,\cdots,n\}$ such that $|I| = k$, 
%
$|I_C|\geq k-r+1$, where $I_C:=({\rm supp}X)\cap I$. Then, we have
\begin{equation}\label{back-success}
\min\limits_{j\notin{\rm supp}X}\|P_{R[A_{I\setminus\{j\}}~\tilde{S}]}^{\perp}\ab_j\|^2>
\max\limits_{j\in \{q_1,\cdots,q_{k-r}\}}\| P_{R[A_{I\setminus\{j\}}~\tilde{S}]}^{\perp}\ab_j\|^2
\end{equation}
provided that
\begin{equation}\label{snr-back}
\frac{\tilde{\sigma}_k(q_{k-r},I)}{\Delta}> 1+
\frac{1}{1-\gamma(1+\alpha)},
\end{equation}
and $m,n \rightarrow \infty$, 
where $\tilde{\sigma}(q,I)=\max\{\sigma_k([A_{I_T}~S]):T\subset I_C\setminus\{q\},|T|=k-r\}$ for $q\in I_C$ and $I_C=\{q_1,\cdots,q_{|I_C|}\},$ which satisfies
$\tilde{\sigma}(q_1,I)\geq \tilde{\sigma}(q_2,I)\geq\cdots \geq \tilde{\sigma}(q_{|I_C|},I),$ and
 $\gamma:=\lim_{n\rightarrow\infty}k/m$, $\alpha:=\lim_{n\rightarrow\infty}r/k$.
 \end{theorem}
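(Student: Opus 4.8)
The plan is to mirror the noisy analysis of the forward step (Theorems~\ref{thm:gmusic} and~\ref{thm:snr}), separately controlling the quantity $\zeta(j) := \|P_{R[A_{I\setminus\{j\}}~\tilde S]}^{\perp}\ab_j\|^2$ from below on the false indices $j\notin{\rm supp}X$ and from above on the true indices $q_1,\ldots,q_{k-r}$, and then reconciling the two bounds through the concentration of Gaussian column norms. Throughout I would use the noiseless backward filtering criterion of Theorem~\ref{thm-gmusic-sub}, Corollary~\ref{prop-rank}, and the perturbation bound~\eqref{eq:upper} as the workhorses.

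First, for the lower bound, I would fix $j\notin{\rm supp}X$ and observe that $\ab_j$ is statistically independent of $R[A_{I\setminus\{j\}}~\tilde S]$: since $j\notin{\rm supp}X$, the column $\ab_j$ enters neither $AX$ nor the noise $W$, hence it is independent of $Y$ and therefore of the signal-subspace estimate $\tilde S$, as well as of the remaining columns $A_{I\setminus\{j\}}$. As $\tilde S$ is a generic rank-$r$ perturbation, $R[A_{I\setminus\{j\}}~\tilde S]$ has dimension $|I\setminus\{j\}|+r=k-1+r$ (which is at most $m$ by the hypothesis $m\geq k+r$). Exactly as in Theorem~\ref{thm:gmusic}, $\ab_j^{*}P_{R[A_{I\setminus\{j\}}~\tilde S]}^{\perp}\ab_j$ is then a scaled chi-squared variable of $m-(k-1+r)$ degrees of freedom, and Lemma 3 in \cite{Fletcher2009nscond} applied uniformly over $j$ yields $\min_{j\notin{\rm supp}X}\zeta(j)/\max_{1\leq j\leq n}\|\ab_j\|^2\to (m-k-r+1)/m\to 1-\gamma(1+\alpha)$ as $m,n\to\infty$, using $r/m\to\alpha\gamma$.

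Second, for the upper bound, I would fix $q_i$ with $i\leq k-r$ and choose $T\subset I_C\setminus\{q_i\}$ with $|T|=k-r$ attaining the maximum in the definition of $\tilde\sigma(q_i,I)$, so that $\sigma_k([A_T~S])=\tilde\sigma(q_i,I)$. Since $T\subset I\setminus\{q_i\}$, monotonicity of the orthogonal complement under subspace inclusion gives $\zeta(q_i)\leq\|P_{R[A_T~\tilde S]}^{\perp}\ab_{q_i}\|^2$. Here $[A_T~S]$ has rank exactly $k$ by Corollary~\ref{prop-rank}, so $R[A_T~S]=R(A_{{\rm supp}X})\ni\ab_{q_i}$ and the noiseless term vanishes, while $[A_T~\tilde S]$ has exactly $k$ columns; thus the rank-$k$ perturbation bound~\eqref{eq:upper} (the $l=0$ case, with $T$ in the role of $I_{k-r}$) applies verbatim with perturbation $\|[A_T~S]-[A_T~\tilde S]\|=\Delta$. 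Writing $\ab_{q_i}^{*}P_{R[A_T~\tilde S]}^{\perp}\ab_{q_i}=\ab_{q_i}^{*}\big(P_{R[A_T~S]}-P_{R[A_T~\tilde S]}\big)\ab_{q_i}$ and bounding by the operator norm then yields $\zeta(q_i)\leq\frac{\Delta}{\tilde\sigma(q_i,I)-\Delta}\|\ab_{q_i}\|^2$. Because the $q_i$ are ordered by decreasing $\tilde\sigma$, the worst case over $i\leq k-r$ is governed by $\tilde\sigma(q_{k-r},I)$, giving $\max_{i\leq k-r}\zeta(q_i)\leq\frac{\Delta}{\tilde\sigma(q_{k-r},I)-\Delta}\max_{1\leq j\leq n}\|\ab_j\|^2$.

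Finally, I would combine the two estimates: dividing both by $\max_j\|\ab_j\|^2$ and letting $m,n\to\infty$, the desired inequality~\eqref{back-success} follows once $1-\gamma(1+\alpha)>\frac{\Delta}{\tilde\sigma(q_{k-r},I)-\Delta}$, which rearranges exactly to the hypothesis~\eqref{snr-back} (the positivity of the denominator, $\sigma_k([A_T~S])>\Delta$, being guaranteed by the same condition). The main obstacle I expect is the first step: justifying that $R[A_{I\setminus\{j\}}~\tilde S]$ attains full dimension $k-1+r$ despite $\tilde S$ being only a small perturbation of a subspace lying inside $R(A_{{\rm supp}X})$, so that the effective degrees of freedom are $m-k-r+1$, and making the chi-squared concentration uniform over all $j$ (the same $\log n/m\to 0$ regularity invoked in Theorem~\ref{thm:gmusic}). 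The upper-bound step is comparatively routine once one notices that the inclusion $R[A_T~\tilde S]\subseteq R[A_{I\setminus\{q_i\}}~\tilde S]$ lets us replace the awkward $(k-1+r)$-dimensional complement by the clean rank-$k$ augmented subspace to which~\eqref{eq:upper} directly applies.
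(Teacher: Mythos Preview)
Your proposal is correct and follows essentially the same argument as the paper's proof in Appendix~B: the same monotonicity step replacing $I\setminus\{q_i\}$ by a $(k-r)$-element subset $T\subset I_C\setminus\{q_i\}$, the same use of the vanishing noiseless term together with the perturbation bound~\eqref{eq:upper}, the same chi-squared lower bound with $m-k-r+1$ degrees of freedom yielding $1-\gamma(1+\alpha)$, and the same reduction to the worst index $q_{k-r}$. The only cosmetic difference is that the paper bounds the difference $\zeta(j)-\zeta(q)$ directly rather than treating the two sides separately; regarding your flagged obstacle, the paper simply observes that the dimension of $R[A_{I\setminus\{j\}}~\tilde S]$ is \emph{at most} $k-1+r$, so the chi-squared has \emph{at least} $m-k-r+1$ degrees of freedom, which sidesteps the need to argue that the perturbed augmented matrix attains full rank.
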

\begin{proof}
See Appendix B.
\end{proof}

In Theorem \ref{thm:snr-back},  $\tilde{\sigma}_k(q_{k-r},I)$ increases when an initial support estimation has  more correct support because of the equation in which $\tilde{\sigma}_k(q,I)$ is given by the maximum value out of ${|I_C| \choose k-r}$ possibilities.  
Moreover, if we increase the ratio $m/(k+r)$, then the right-hand side of \eqref{snr-back} decreases  so that we can expect a greater possibility of  accurate support filtering with an increased redundant number of samples than the critical sampling rate.

To confirm a support filtering useful for performance improvement, we  examine the cases where the sufficient condition for an initial support estimation is less favorable than that of a support filtering.  Characterization of such  cases should be done with respect to a particular initial support estimation algorithm. 
For example,  in the case of a  subspace S-OMP for an initial support estimation,    a sufficient condition  for the success of subspace S-OMP for $\alpha>0$ is given by \cite{Kim2010CMUSIC}:
\begin{equation}\label{pert-omp}
\|P_{R(\tilde{U}_1)}-P_{R(U_1)}\|\ <
\frac{ \alpha- \alpha \sqrt{\gamma}(2-F(\alpha))}{2} \  ,
\end{equation}
where $F(\alpha)$ is an increasing function such that $F(1)=1$ and $\lim_{\alpha\rightarrow0^{+}}F(\alpha)=0$, which is defined as
$F(\alpha)=\frac{1}{\alpha}\int_0^{4t_1(\alpha)^2}xd\lambda_1(x),$
$d\lambda_1(x)=(\sqrt{(4-x)x})/(2\pi x)dx$
is the probability measure with support $[0,4]$, $0\leq t_1(\alpha)\leq 1$ satisfies
$\int_0^{2t_1(\alpha)}ds_1(x)=\alpha$,
and
$ds_1(x)=(1/\pi)\sqrt{4-x^2}dx$ is a probability measure with support $[0,2]$.
Now, the SNR condition for support filtering in Eq.~\eqref{snr-back} can be translated into a threshold of the allowable augmented subspace perturbation of $1-\gamma(1+\alpha)$, when $m\geq r+k$. Hence, the  gap between the two bounds  is given by   $$f(\gamma,\alpha) =  \frac{\alpha -\alpha \sqrt{\gamma}(2-F(\alpha))}{2} - (1-\gamma(1+\alpha) ).$$ 
Fig.~\ref{fig:alpha_gamma} characterizes the function $f(\gamma, \alpha)$. In region A,  $f(\gamma,\alpha)<0$ and $m>r+k$,  hence, the support filtering has more noise robustness and  can correct  errors from  subspace S-OMP.  As we can see from 
Fig.~\ref{fig:alpha_gamma}, support filtering is effective in most of the practical sampling rate, and especially when we have redundant samples or  ${\rm rank}(Y)$ is relatively small.
The larger size of region A that favors support filtering again confirms that support filtering is a quite useful technique to improve noise robustness.

\begin{figure}[htbp]
\centering\includegraphics[width=8cm]{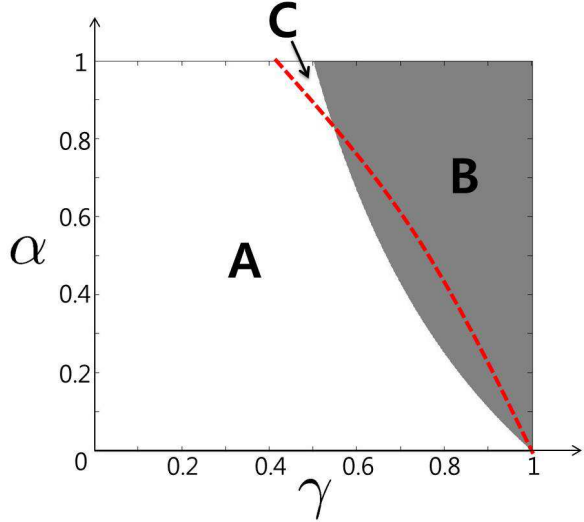}
\caption{Feasibility region where support filtering becomes more effective than subspace S-OMP.}
\label{fig:alpha_gamma}
\end{figure}

\section{NUMERICAL RESULTS}
\label{sec:results}

In this section, we perform extensive numerical experiments to validate the proposed algorithm  under various experimental conditions, and compare it with respect to  existing joint sparse recovery algorithms.

\subsection{Dependency on  Snapshot Number}

First, we demonstrate that a sequential CS-MUSIC is less sensitive to the number of snapshots. The simulation parameters were as follows: $m\in\{1,2,\ldots,30\}$, $n=128, k=8, r=4$, and $N\in\{6,16,256\}$, respectively. The elements of a sensing matrix $A$ were generated from a Gaussian distribution having zero mean and variance of $1/m$, and then each column of $A$ was normalized to have an unit norm.  An unknown signal $X$ with ${\rm rank}(X)=r\leq k$ was generated using the same procedure as in \cite{Lee2010SAMUSIC}. Specifically, we  randomly generated a support $I$, and then the corresponding nonzero signal components were obtained by
\begin{equation}\label{eq:source}
X^I=\Psi\Lambda\Phi \ ,
\end{equation} where $\Psi \in \Re^{k\times r}$ and $\Lambda$ were set to random orthonormal columns and the identity matrix, respectively, and $\Phi \in \Re^{r\times N}$ were made using Gaussian random distribution with zero mean and variance of $1/N$. After generating noiseless data,  we added zero mean white Gaussian noise to have $SNR=30dB$ measurements.
We declared success if an estimated support was the same as a true ${\rm supp}X$, and  success rates were averaged over $1000$ experiments. 

 Fig.~\ref{fig:Ndep} shows  success rates of a sequential CS-MUSIC compared to that of   CS-MUSIC or SA-MUSIC. Since SA-MUSIC in \cite{Lee2010SAMUSIC} is equivalent to CS-MUSIC for a normalized $A$ matrix,  the original code of SA-MUSIC was used for fair comparison.  As shown in Fig~\ref{fig:Ndep},  sequential CS-MUSIC exhibits nearly similar recovery performance for various snapshot numbers, whereas the original form of CS-MUSIC/SA-MUSIC requires a large  number of snapshots to achieve maximum performance. 

\begin{figure}[htbp]
\centering{\includegraphics[width=8cm]{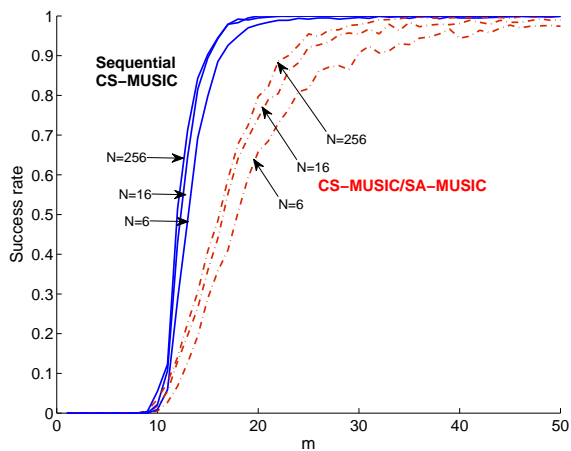}}
\caption{Snapshot dependent performance behaviour of the sequential CS-MUSIC and the original CS-MUSIC/SA-MUSIC.  The simulation parameters are $n=128, k=8, r=4$ and $SNR=30dB$. }
\label{fig:Ndep}
\end{figure}

In order to identify the contribution of the forward and backward greedy steps in the performance improvement, we   perform additional experiments  using the same simulation setup. Fig.~\ref{fig:each_step} illustrates the performances of sequential CS-MUSIC, a variation of sequential CS-MUSIC without backward support  filtering, and the  original CS-MUSIC/SA-MUSIC algorithm for $N=6,16$, respectively.  Here,  an initial $k-r$ support for CS-MUSIC/SA-MUSIC and the sequential CS-MUSIC were estimated using an identical   subspace S-OMP algorithm in \cite{Kim2010CMUSIC,Lee2010SAMUSIC} so that performance differences came only from the sequential subspace estimation step. For a bigger $N$ where the signal subspace error is  small,  performance improvement due to the sequential subspace estimation was not remarkable.  However, the advantages of sequential subspace estimation is especially noticeable for a small number of snapshots where a subspace estimation is prone to error.  On the other hand,  the backward support filtering is beneficial for all ranges of snapshots since it corrects the contribution of a partial support estimation error in a subspace S-OMP  step.

\begin{figure}[htbp]
\centering\includegraphics[width=8cm]{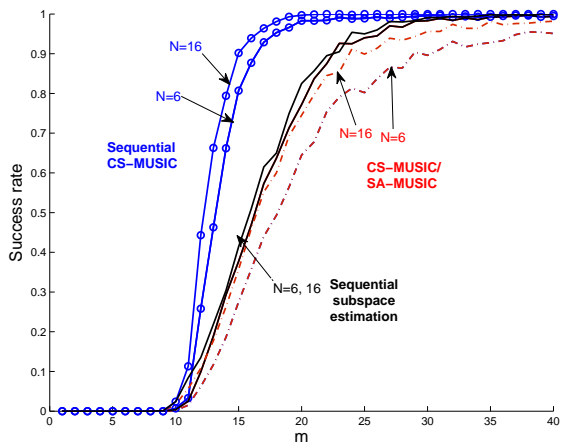}
\caption{Sequential CS-MUSIC, Sequential CS-MUSIC without greedy subspace estimation, and  CS-MUSIC/SA-MUSIC performance for various snapshot numbers.   Simulation parameters are $n=128, k=8, r=4$ and $SNR=30dB$. }
\label{fig:each_step}
\end{figure}

\subsection{Performance Comparison with  State-of-Art Joint Sparse Recovery Algorithms}

To compare the proposed algorithm with various state-of-art joint sparse recovery methods,  the recovery rates of various state-of-art joint sparse recovery algorithms such as   CS-MUSIC/SA-MUSIC, $l_1/l_2$ mixed norm approaches \cite{obozinski2011support,malioutov2005ssr,van2008probing},  and M-SBL \cite{wipf2007ebs}, are plotted in Fig.~\ref{fig:variousMMV} along with  those of a sequential CS-MUSIC.  Among the various implementation of mixed norm approaches, we used high performance  SGPL1 software \cite{van2008probing}, which can be downloaded from  ${\rm http://www.cs.ubc.ca/labs/scl/spgl1/}$. For M-SBL implementation, we used the original implementation by David Wipf. 
Since the mixed norm approach and M-SBL do not provide a exact $k$-sparse solution, we used the support for the largest $k$ coefficients as a  support estimate in calculating the perfect recovery ratio.
Figs.~\ref{fig:variousMMV}(a) and (b) show the recovery rates for $N=8$ and 256, respectively. Sequential  CS-MUSIC outperforms S-OMP and the original CS-MUSIC/SA-MUSIC consistently, and its performance nearly achieves those of M-SBL and the mixed norm approaches.  Note that the performance of M-SBL and the mixed norm approaches were identical.
Indeed, the additional sampling cost  for a  sequential CS-MUSIC compared to the M-SBL or the mixed norm approaches is  very small. Considering that any subspace method needs additional redundancy (i.e. $m\geq k+1$) to avoid ambiguity in the signal subspace estimation,  we believe that sequential CS-MUSIC nearly achieves the optimum performance.
Furthermore, this high performance can be achieved at negligible computational complexity.  Note that the complexity of the sequential CS-MUSIC is only a fraction of those of M-SBL and the mixed norm approaches,  as shown in  Figs.~\ref{fig:variousMMV}(c)(d) for $N=8$ and 256, respectively.   

\begin{figure}[htbp]
\centering{\includegraphics[width=8cm]{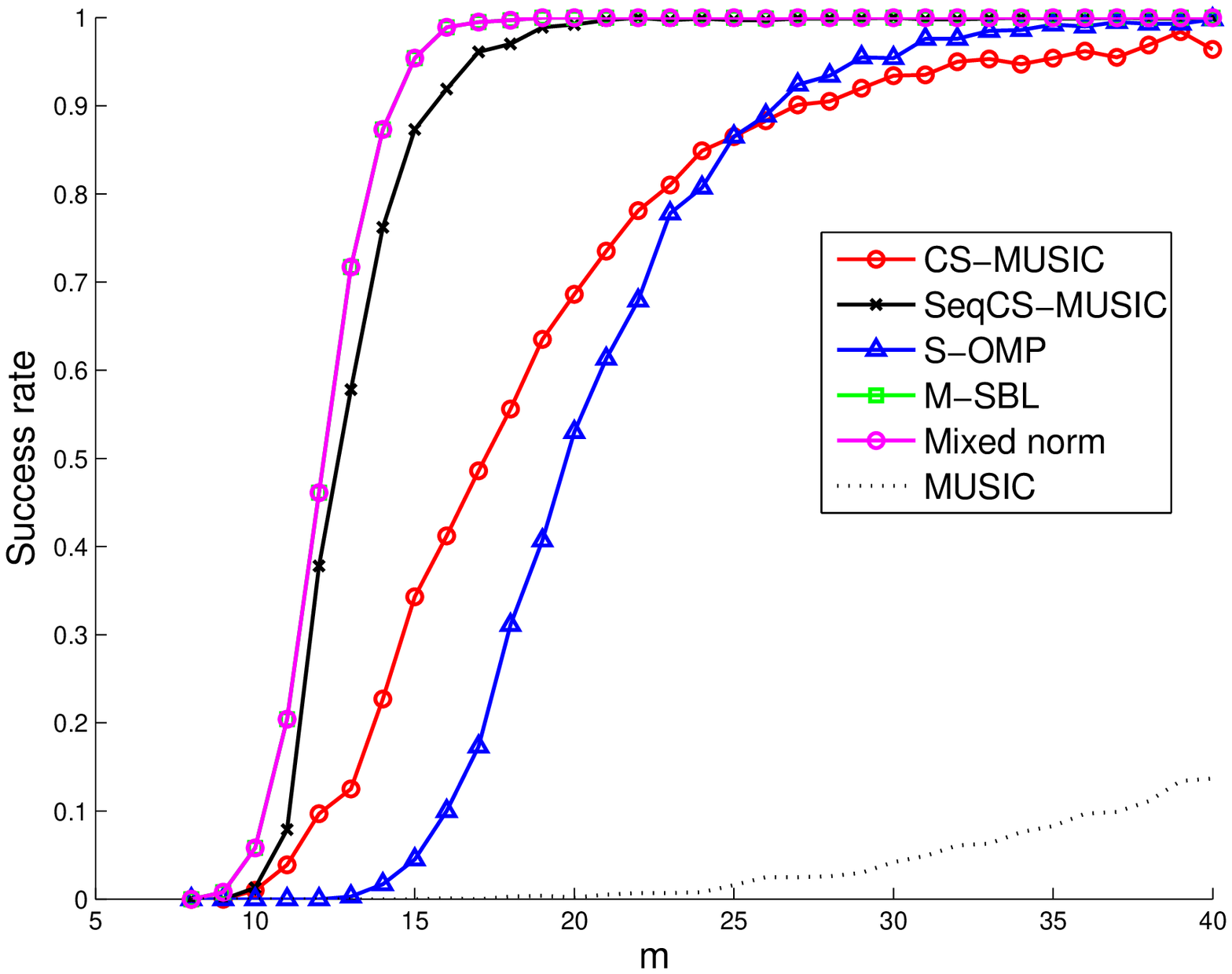}\includegraphics[width=8cm]{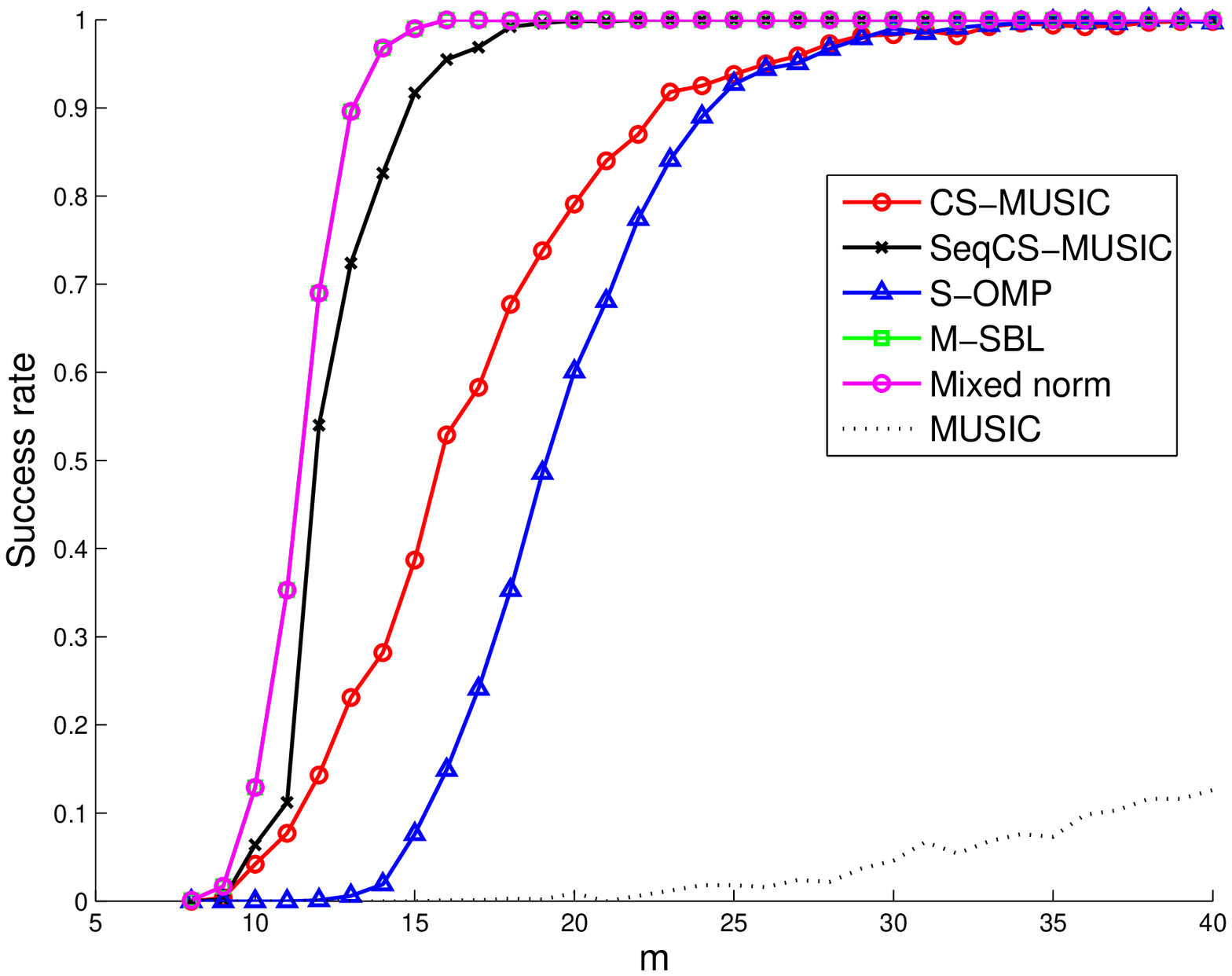}}
\centering{\includegraphics[width=8cm]{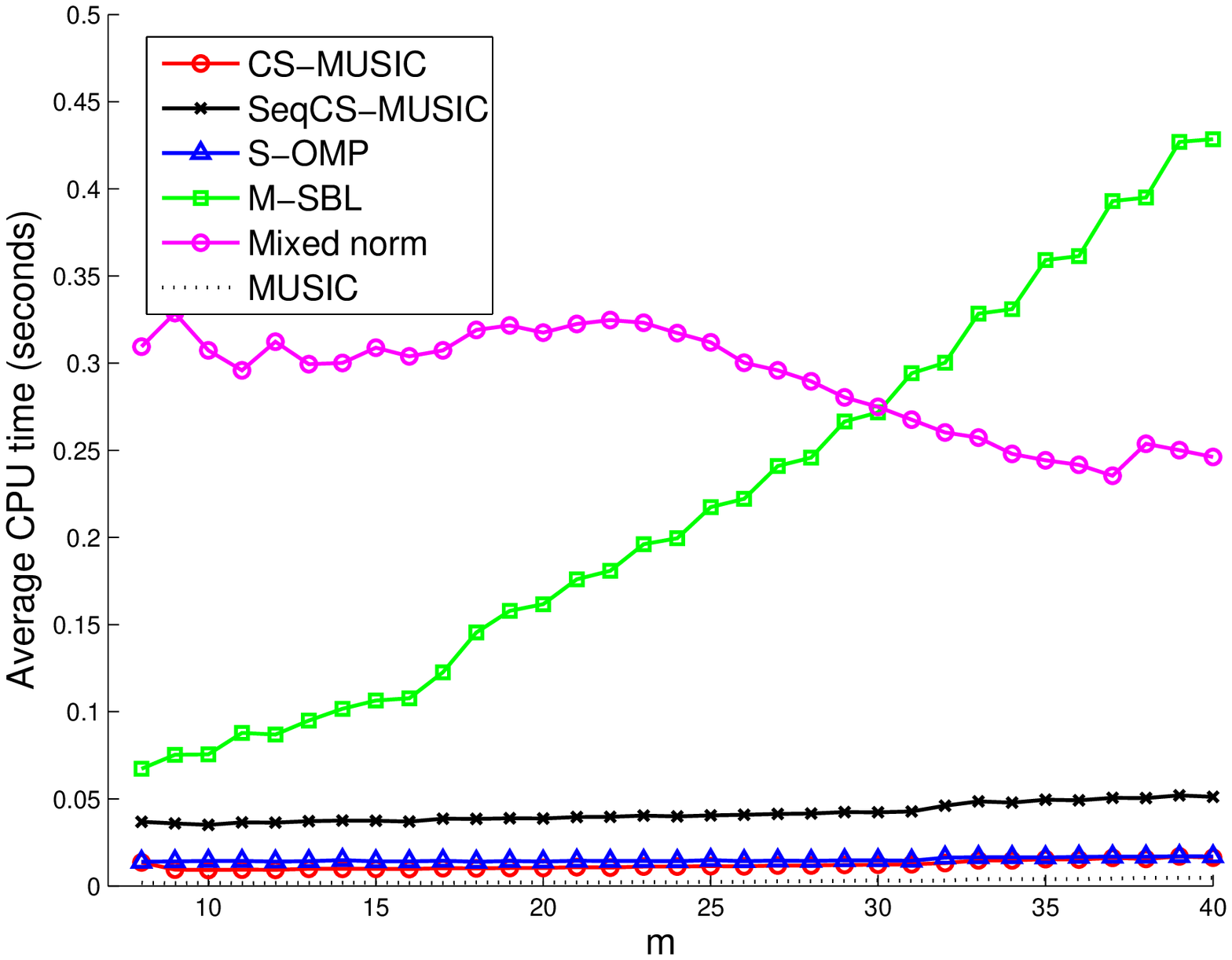}\includegraphics[width=8cm]{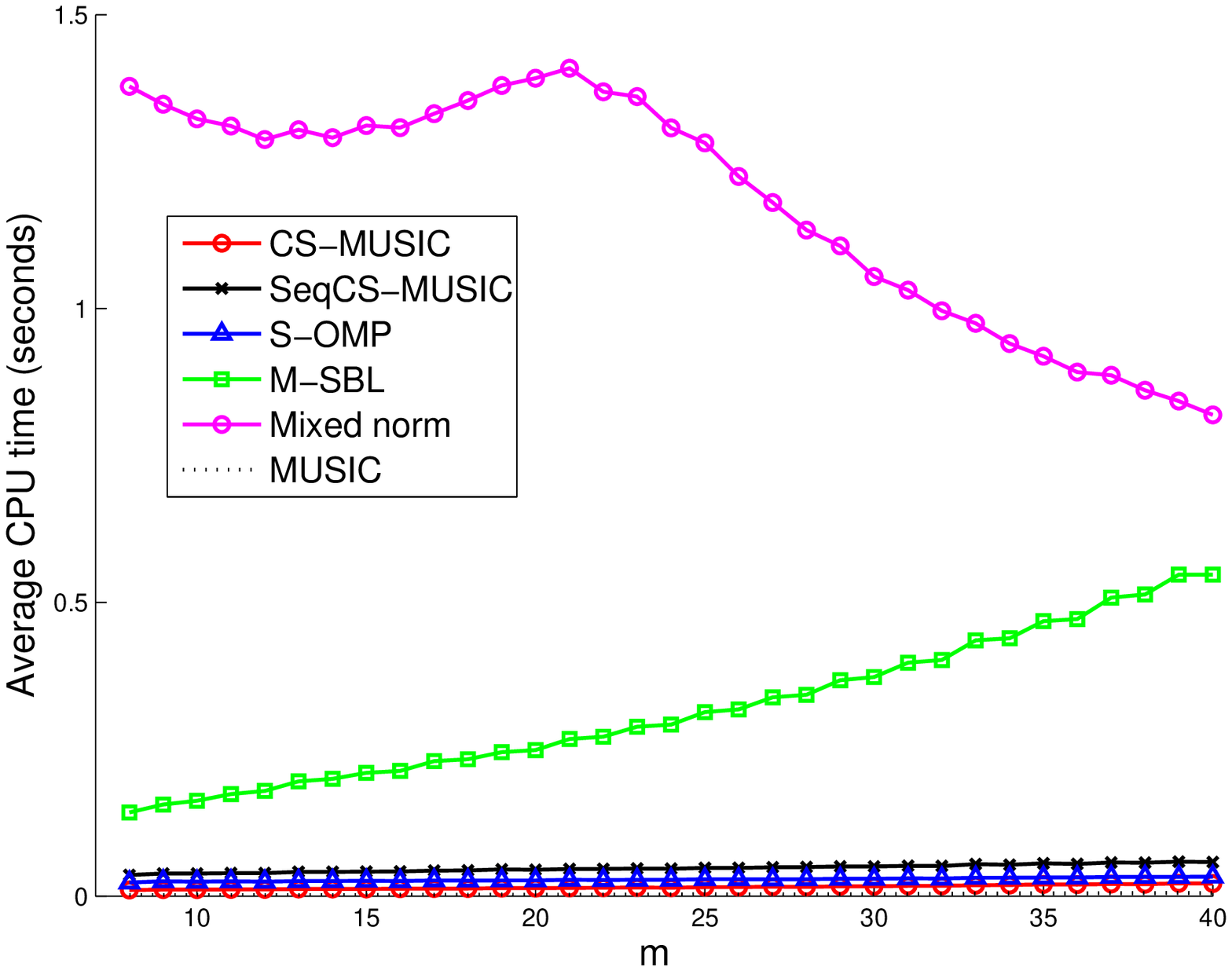}}
\caption{Performance of various joint sparse recovery algorithm at (a) $N=8$ and (b) $N=256$,   when $n=128, k=8, r=4$,  and $SNR=30dB$. (c)(d)  Average CPU time for $N=8$ and $N=256$, respectively.}
\label{fig:variousMMV}
\end{figure}

To show the dependency of recovery performance on the condition number of $X$, we conducted  simulations  for two different types of $X$. More specifically, the $j$-th diagonal term of $\Lambda$ in Eq.~\eqref{eq:source} is given by $\sigma_j=\tau^{j-1}$ for $j=1,\cdots,r$. The results in Fig.~\ref{fig:RIPcond}(a) provide evidence that sequential CS-MUSIC is not greatly affected by  the condition number of $X$, and  appears less sensitive than M-SBL.
Next, we performed  simulation studies  for  the different types of  RIP conditions using various MMV algorithms. More specifically, we assumed that each component of a sensing matrix follows $ \mathcal{N}(a, 1/m)$ and then  normalized each column of $A$ to have a unit norm.  The mean values are set to $a=0$ and $1$, where a larger $a$ represents a worse RIP condition. In this simulation, $N=64$ and the other parameters are the same as before.  Fig.~\ref{fig:RIPcond}(b) shows that sequential CS-MUSIC is more robust that the original CS-MUSIC/SA-MUSIC for unfavorable RIP conditions. 
However, compared to M-SBL, the sequential CS-MUSIC appears less robust to unfavorable RIP conditions, which is commonly observed in most of the greedy approaches.

\begin{figure}[htbp]
\centering{\includegraphics[width=8cm]{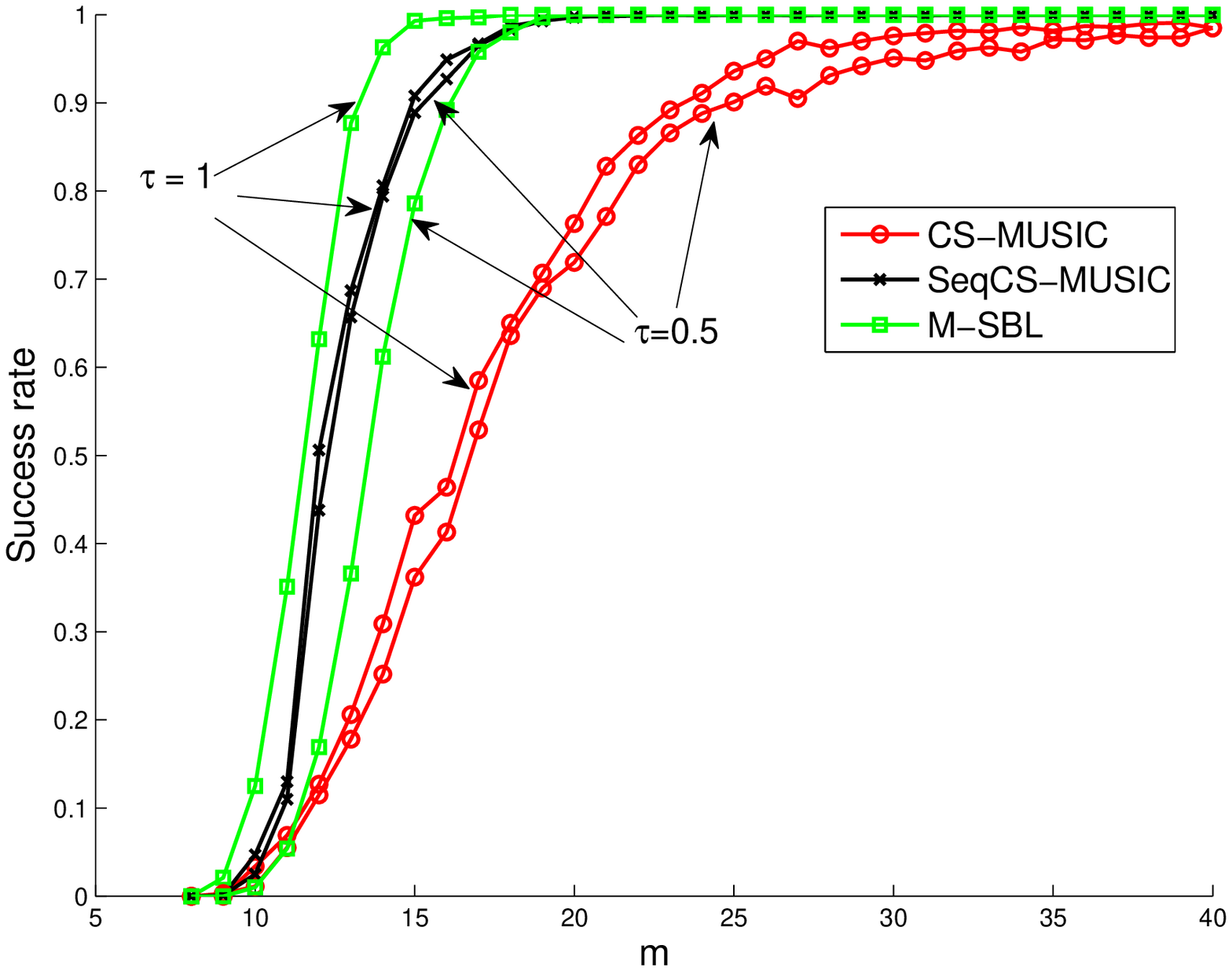}\includegraphics[width=8cm]{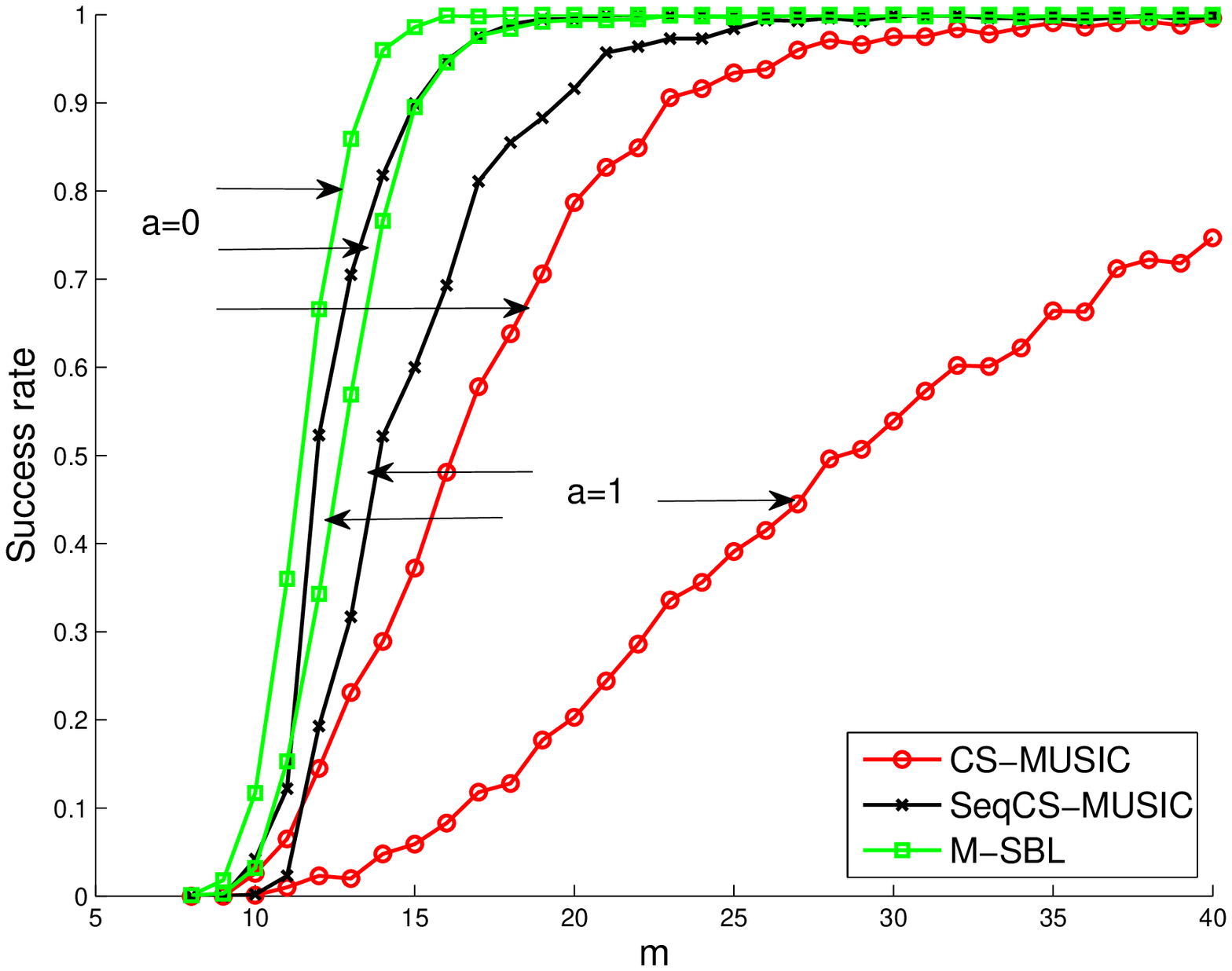}}
\caption{Recovery rates of M-SBL, sequential CS-MUSIC and CS-MUSIC/SA-MUSIC for (a) different condition numbers and (b) various RIP condition of $A$ (larger mean value provides worse RIP condition). The simulation parameters were $n=128, k=8, r=4$, $N=64$ and SNR=30dB.}
\label{fig:RIPcond}
\end{figure}

%

%
%

\subsection{Fourier Sensing Matrix}

Finally, we conducted similar numerical experiments using the Fourier sensing matrix. In this case,  the source model in Eq.~\eqref{eq:source} is  set to be complex valued.
Fig.~\ref{fig:variousMMVFourier} illustrates the recovery performance of various MMV algorithms for the Fourier sensing matrix when $n=128$, $k=8$, and $r=4$ at the SNR of 30dB for $N=5$.  We  again observed  similar performance improvement as in the Gaussian matrix.  
 Note again that the performance of M-SBL and the mixed norm approaches were identical.
However, compared to the Gaussian cases, a Fourier measurement  has  redundancies in imaginary information, which improves the overall recovery performance.

\begin{figure}[htbp]
\centering{\includegraphics[width=8cm]{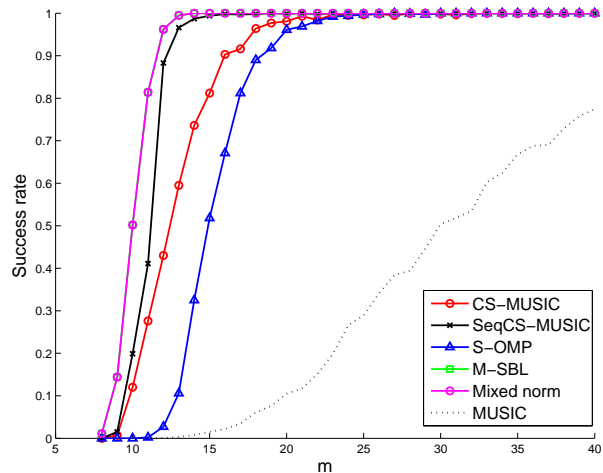}}
\caption{Performance of various joint sparse recovery algorithm from Fourier sensing matrix at    when $n=128, k=8, r=4$,  $N=5$ and $SNR=30dB$. }
\label{fig:variousMMVFourier}
\end{figure}

\section{DISCUSSION}

\subsection{Relation to Prior Constrained MUSIC and Sequential MUSIC}

In a prior constrained MUSIC algorithm \cite{linebarger1995incorporating}, the  prior knowledge of a direction-of-arrival (DOA) is incorporated into the MUSIC criterion to improve estimation performance  by filtering out the known source directions via orthogonal projections.  If the prior knowledge of a partial support is exact,  the prior constrained MUSIC is closely related to the generalized MUSIC step in CS-MUSIC/SA-MUSIC with an exact partial support estimate.
However,  as shown in \cite{wirfalt2011prior} as well as in this paper,  such an algorithm is affected by the resulting perturbation of the augmented subspace  if the number of snapshots  or SNR is not sufficiently high or the partial support knowledge is errorneous.   There have been  several approaches to improve the noise robustness of a prior constrained MUSIC  (see  \cite{wirfalt2011prior} and references therein); however,    to the best of our knowledge,   we are not aware of any existing method that improves the robustness of an augmented signal subspace estimate using sequential subspace estimation and support filtering.

Sequential MUSIC and its variations \cite{stoica1995improved,oh1993sequential,mosher1998recursive,mosher1999source} may appear  closely related to the proposed method. Indeed,  Davies {\em et al.}  \cite{davies2012rank} showed that  the 
Recursively Applied and Projected (RAP)-MUSIC \cite{mosher1999source}
is equivalent to a subspace S-OMP  (SS-OMP) step - a partial support recovery estimation part of CS-MUSIC. 
However, as shown in \cite{Kim2010CMUSIC}, the SNR requirement of a subspace S-OMP is much  tighter than that of a  generalized MUSIC step.  Therefore, switching from sequential MUSIC to the generalized MUSIC step would be beneficial.
Moreover,   our analysis in this paper showed that the  sequential subspace estimation further relaxes the  SNR condition sequentially. This implies that   even if 
the SNR condition  of the generalized MUSIC  is not satisfied initially, during the sequential subspace estimation the SNR condition can be met and the overall recovery performance can be improved.   To the best of our knowledge,  this type of techniques have not been reported for any existing sequential MUSIC algorithms  \cite{stoica1995improved,oh1993sequential,mosher1998recursive,mosher1999source}.

\subsection{Sparsity Estimation}

So far,  our derivation assumes the prior knowledge of support size.  In our previous work \cite{Kim2010CMUSIC},  we derived  a sparsity estimation algorithm. Note that  this algorithm can be incorporated at each greedy step to make the algorithm work, even without knowing the sparsity level {\em a priori}. In addition, there are various heuristics that could be
used to estimate the sparsity in  MUSIC type parametric methods.    However,  they have a nonzero probability of being wrong if the measurement is noisy. Typically, these sparsity estimation algorithms tend to overestimate,  so various model order selection criteria have been often incorporated to avoid this overestimation \cite{stoica2004model}.  

Note that  the RIP condition $\delta^{L}_{2k-r+l}<1$ implies that  the maximal sparsity level that our algorithm can recover does not exceed $m-1$, and the corresponding submatrix $A_{\hat I}$ for the support estimate $\hat I$ is always full column ranked.   
This implies that, as long as the condition number of $A_{\hat I}$ is not bad and the noise levels are sufficiently small, 
 we can implement thresholding techniques in a reconstruction domain to find a sparse signal,  similar to other non-parametric sparse recovery approaches like iterative thresholding, M-SBL, etc. 
When such thresholding scheme may not be sufficiently accurate,   the current technique has limitations and we need a new way to estimate the sparsity level. Though the sparsity estimation is very important topic, this is  beyond scope of the current work, and will be reported elsewhere.
%

\subsection{Limitation of Noisy Analysis}
 
Even though our noisy analysis provides useful insight on the origin of the noise robustness of a sequential CS-MUSIC,  current analysis has two limitations. First, the analysis is based on the Gaussian sensing matrix using an asymptotic argument.  Hence,  the analysis should be modified for a general sensing matrix such as Fourier. An RIP based analysis in SA-MUSIC \cite{Lee2010SAMUSIC} would work toward this goal.  Second, the noisy performance analysis is based on comparing sufficient conditions. Since a sufficient condition is often more restricted than necessary,  the analysis in this paper should be  understood as a more conservative comparison.

\section{CONCLUSION}
\label{sec:conclusion}

In this paper, we derived  two greedy strategies to improve the noise robustness of 
recent hybrid joint sparse recovery algorithms such as CS-MUSIC and SA-MUSIC. Although these hybrid algorithms significantly outperform any other conventional greedy MMV algorithms, the performance improvement is reduced for a limited number of snapshots.
We showed that the performance degradation is due to a perturbation in an augmented signal subspace estimation originating from an inaccurate subspace or
partial support estimation.  Furthermore,  we demonstrated that  even with limited number of snapshots, there are two different ways to improve the noise robustness of augmented signal subspace estimation: one by sequential subspace estimation and the other by  filtering out incorrect support.   
We further explained that the two greedy steps are byproducts of 
a novel generalized subspace criterion.  Theoretical analysis in noisy situations revealed the origins of the noise robustness of the proposed algorithm and 
led to the identification of  sampling conditions where each greedy step  becomes beneficial.
Extensive numerical simulation demonstrated that the new algorithm consistently outperforms  the existing greedy algorithms  and nearly achieves optimal performance with minimal computational complexity.

\section*{Appendix~A}
To obtain the perturbation bound Eq.~\eqref{eq:upper}  in sequential subspace estimation, we use the following theorem. 
\begin{theorem}\label{thm:perturb}\cite{Wedin72pbsvd}
Assume that $G\in\mathbb{R}^{m\times q}$ has the singular value decomposition
$$G=U\Sigma V^{*}=U_1\Sigma_1V_1^{*}+U_0\Sigma_0V_0^{*}=G_1+G_0$$
where $G_1:=U_1\Sigma_1V_1^{*}$ and $G_0:=U_0\Sigma_0V_0^{*}.$
Also, for a perturbed matrix $\tilde{G}\in\mathbb{R}^{m\times q}$ of $A$, assume that $\tilde{G}$ has the singular value decomposition
$$\tilde{G}=\tilde{U}\tilde{\Sigma}\tilde{V}^{*}=\tilde{U}_1\tilde{\Sigma}_1\tilde{V}_1^{*}+\tilde{U}_0\tilde{\Sigma}_0\tilde{V}_0^{*}=\tilde{G}_1+\tilde{G}_0$$
where $\tilde{G}_1:=\tilde{U}_1\tilde{\Sigma}_1\tilde{V}_1$ and $\tilde{G}_0:=\tilde{U}_0\tilde{\Sigma}_0\tilde{V}_0^{*}$, and $U_1$ and $\tilde{U}_1$ (or $U_0$ and $\tilde{U}_0$) are the matrices of same size. If there exist $\alpha\geq 0$ and $\delta>0$ such that
\begin{equation*}\label{sin-cond}
\sigma_{\min}(\tilde{G}_1)\geq \alpha+\delta~{\rm and}~\sigma_{\max}(G_0)\leq \alpha,
\end{equation*}
then for every unitary invariant norm,
\begin{equation*}\label{sin-bound}
\sin{\theta(R(\tilde{G}_1),R(G_1))}=\|P_{R(\tilde{G}_1)}-P_{R(G_1)}\|\leq \frac{\epsilon}{\delta},
\end{equation*}
where 
\begin{align*}
\epsilon:=\max(\|R_1\|,\|R_2\|)&,&
R_1:=-(\tilde{G}-G)V_1,R_2:=-(\tilde{G}-G)^{*}U_1.
\end{align*}
\end{theorem}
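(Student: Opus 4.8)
The plan is to prove this as the classical Wedin $\sin\theta$ estimate: rewrite the left-hand side as a canonical-angle quantity, turn the two residual identities into a coupled pair of matrix equations for the relevant off-diagonal blocks, and then invert that system using the spectral gap built into the hypotheses. Since the statement is quoted from \cite{Wedin72pbsvd}, I would present only this structural outline and leave the sharp tracking of constants to that reference.

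First I would observe that $R(G_1)=R(U_1)$ and $R(\tilde G_1)=R(\tilde U_1)$, with $U_1,\tilde U_1$ carrying orthonormal columns of the same size, so that $\|P_{R(\tilde G_1)}-P_{R(G_1)}\|$ is exactly the largest sine of the canonical angles between the two column spaces. By the cosine--sine decomposition this equals $\|\tilde U_0^{*}U_1\|$, so it suffices to bound the block $A:=\tilde U_0^{*}U_1$ in operator norm, together with the companion right block $B:=\tilde V_0^{*}V_1$ arising from the right singular vectors. The presence of both a left and a right block---both the column space and the row space rotate under the perturbation---is what makes the singular-value estimate genuinely two-sided and distinguishes it from the Hermitian $\sin\theta$ theorem.

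Next I would convert the residuals into equations for $A$ and $B$. Using $GV_1=U_1\Sigma_1$ and $G^{*}U_1=V_1\Sigma_1$ for the unperturbed factorization together with $\tilde U_0^{*}\tilde G=\tilde\Sigma_0\tilde V_0^{*}$ and $\tilde V_0^{*}\tilde G^{*}=\tilde\Sigma_0\tilde U_0^{*}$ for the perturbed one, I would left-multiply $R_1=-(\tilde G-G)V_1$ by $\tilde U_0^{*}$ and $R_2=-(\tilde G-G)^{*}U_1$ by $\tilde V_0^{*}$ to obtain the coupled pair
\begin{equation*}
\tilde U_0^{*}R_1=A\Sigma_1-\tilde\Sigma_0 B,\qquad \tilde V_0^{*}R_2=B\Sigma_1-\tilde\Sigma_0 A ,
\end{equation*}
in which $\Sigma_1$ collects the retained singular values, $\tilde\Sigma_0$ the discarded ones, and the two identities link the left block $A$ to the right block $B$.

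The final and most delicate step is to invert this coupled system. Because $\Sigma_1$ and $\tilde\Sigma_0$ are diagonal, the system decouples entrywise: the scalar pair $(A_{kj},B_{kj})$ is acted on by the matrix
\begin{equation*}
\begin{pmatrix} s_j & -t_k \\ -t_k & s_j \end{pmatrix},
\end{equation*}
where $s_j$ is a retained singular value and $t_k$ a discarded one, and this matrix has smallest singular value $s_j-t_k$. The hypotheses $\sigma_{\min}(\tilde G_1)\ge\alpha+\delta$ and $\sigma_{\max}(G_0)\le\alpha$, with Weyl's inequality used to reconcile the perturbed and unperturbed parts, separate the retained from the discarded values by a uniform margin of order $\delta$, so the full system is invertible and, carrying the constant as in \cite{Wedin72pbsvd}, its inverse has norm at most $1/\delta$. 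Feeding in $\|R_1\|,\|R_2\|\le\epsilon$ then gives $\|A\|\le\epsilon/\delta$ and likewise for $B$, which is the asserted bound. I expect the main obstacle to be precisely this inversion: recognizing that the two-sided coupling block-diagonalizes into uniformly gapped $2\times2$ problems, and verifying that the smallest singular value of each is controlled purely by the prescribed separation $\delta$. Everything preceding it is bookkeeping with the two orthogonal decompositions.
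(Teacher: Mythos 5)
First, a framing point: the paper does not prove this theorem at all --- it is quoted from Wedin's 1972 paper and invoked as a black box in Appendix~A --- so there is no in-paper argument to compare against. Your outline is essentially the standard textbook proof of the Wedin $\sin\Theta$ theorem (canonical angles, coupled residual equations for the off-diagonal blocks, inversion across the spectral gap), and its overall architecture is correct.

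There is, however, one genuine gap, and it sits exactly where you predicted the difficulty would be. With the residuals built from the \emph{unperturbed} singular vectors, $R_1=-(\tilde G-G)V_1$ and $R_2=-(\tilde G-G)^{*}U_1$, projecting onto $\tilde U_0$ and $\tilde V_0$ gives, as you write,
\begin{equation*}
\tilde U_0^{*}R_1=(\tilde U_0^{*}U_1)\Sigma_1-\tilde\Sigma_0(\tilde V_0^{*}V_1),\qquad
\tilde V_0^{*}R_2=(\tilde V_0^{*}V_1)\Sigma_1-\tilde\Sigma_0(\tilde U_0^{*}U_1),
\end{equation*}
whose decoupled $2\times 2$ blocks are invertible with norm $1/(s_j-t_k)$ for $s_j\in\sigma(\Sigma_1)$ and $t_k\in\sigma(\tilde\Sigma_0)$. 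But the hypothesis separates $\sigma_{\min}(\tilde G_1)$ from $\sigma_{\max}(G_0)$, i.e.\ $\tilde\Sigma_1$ from $\Sigma_0$ --- the opposite pairing. Weyl's inequality only gives $|\sigma_i(G)-\sigma_i(\tilde G)|\le\|\tilde G-G\|$, so ``reconciling'' the two pairings costs up to $2\|\tilde G-G\|$ of the gap and does not recover the clean $\epsilon/\delta$ bound; this step as written would fail. The clean argument uses the symmetric orientation: take residuals $-(\tilde G-G)\tilde V_1$ and $-(\tilde G-G)^{*}\tilde U_1$ and project onto $U_0,V_0$, which produces blocks gapped by $\sigma_{\min}(\tilde\Sigma_1)-\sigma_{\max}(\Sigma_0)\ge\delta$ exactly as hypothesized. (Your derivation in fact exposes that the statement as transcribed in the paper has the tildes on the residuals' singular vectors swapped relative to Wedin's original formulation.) A secondary, smaller point: entrywise inversion of the $2\times 2$ blocks yields the bound directly for the Frobenius and spectral norms, but the claim ``for every unitarily invariant norm'' needs the operator-equation form of the separation argument (or the Jordan--Wielandt reduction to the Hermitian $\sin\Theta$ theorem) rather than entrywise inversion.
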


{\em (Proof of Eq.~\eqref{eq:upper})}
For a noiseless measurement $[A_{I_{k-r+l}}~S]$, we have
$$\sigma_1\geq\cdots\geq\sigma_k>\sigma_{k+1}=\cdots=\sigma_{k+l}=0$$
so that  we have $\sigma_{\min}(\tilde{U}_1\tilde{\Sigma}_1\tilde{V}_1^{*})\geq \sigma_k([A_{I_{k-r+l}}~S])-\|S-\tilde{S}\|$ and $\sigma_{\max}(U_0\Sigma_0V_0^{*})=0$ so that if
we have $\Delta:=\|P_{R(S)}-P_{R(\tilde{S})}\|<\sigma_k([A_{I_{k-r+l}}~S])$, we can apply Theorem \ref{thm:perturb}. Then, we have
$$\frac{\epsilon}{\delta}=
\frac{\max{(\|(S-\tilde{S})V_1\|,\|(S-\tilde{S})^{*}U_1\|)}}{\sigma_k([A_{I_{k-r+l}}~B])-\|S-\tilde{S}\|}\leq
\frac{\Delta}{\sigma_k([A_{I_{k-r+l}}~S])-\Delta}$$
if $\Delta<\sigma_k([A_{I_{k-r+l}}~S]).$  This concludes the proof.

\section*{Appendix~B}
\label{ap:B}
By the assumption $|I\cap {\rm supp}X|\geq k-r+1$ and the generalized MUSIC criterion, we have $\|P_{R([A_{I\setminus\{q\}}~S])}^{\perp}\ab_q\|^2=0$ for any $j\in {\rm supp}X$. Then, for any $j\notin{\rm supp}X$ and $q\in {\rm supp}X$, we have
\begin{eqnarray}\label{apb-eq1}
&&\|P_{R([A_{I\setminus \{j\}}~\tilde{S}])}^{\perp}\ab_j\|^2-\|P_{R([A_{I\setminus\{q\}}~\tilde{S}])}^{\perp}\ab_q\|^2\notag\\
&\geq&\|P_{R([A_{I\setminus \{j\}}~\tilde{S}])}^{\perp}\ab_j\|^2-\|P_{R([A_{I_{k-r,q}}~\tilde{S}])}^{\perp}\ab_q\|^2\notag\\
&=&\|P_{R([A_{I\setminus \{j\}}~\tilde{S}])}^{\perp}\ab_j\|^2-\|P_{R([A_{I_{k-r,q}}~\tilde{S}])}^{\perp}\ab_q\|^2\notag \\
&&+\|P_{R([A_{I_{k-r,q}}~S])}^{\perp}\ab_q\|^2\notag\\
&\geq&\|P_{R([A_{I\setminus \{j\}}~\tilde{S}])}^{\perp}\ab_j\|^2-\|\ab_q\|^2
\|P_{R([A_{I_{k-r,q}}~\tilde{S}])}-P_{R([A_{I_{k-r,q}}~S])}\|,\notag\\
\end{eqnarray}
since $P^2=P$ for any orthogonal projection operator $P$, where $I_{k-r,q}\subset (I\setminus \{q\})\cap {\rm supp}X.$
For $j\notin {\rm supp}X$, $\ab_j$ is statistically independent from $R([A_{I\setminus\{j\}}~\tilde{S}])$ so that $m\|P_{R([A_{I\setminus\{j\}}~\tilde{S}])}^{\perp}\ab_j\|^2$ is chi-squared random variable with at least $m-k-r+1$ degrees of freedom so that
\begin{eqnarray}\label{apb-eq2}
\liminf\limits_{n\rightarrow\infty}\min\limits_{j\notin {\rm supp}X}\ab_j^{*}
P_{R([A_{I\setminus\{j\}}~\tilde{S}])}^{\perp}\ab_j
\geq 1-\gamma(1+\alpha),
\end{eqnarray}
where $\gamma:=\lim_{n\rightarrow\infty}k/m$ and $\alpha:=\lim_{n\rightarrow\infty}r/k$.

On the other hand, for any $q\in {\rm supp}X$, $m\|\ab_q\|^2$ is a chi-squared random variable of $m$ degrees of freedom, so that by Lemma 3 in \cite{Fletcher2009nscond} we have
$\lim\limits_{n\rightarrow\infty}\max\limits_{q\in {\rm supp}X}\|\ab_q\|^2=1.$
Since $[A_{I_{k-r,q}}~S]$ has a full column rank, using the bound in Eq.~\eqref{eq:upper} 
for any $I_{k-r,q}\subset ({\rm supp}X\cap I)\setminus \{q\}$, we have 
\begin{eqnarray*}
\|P_{R([A_{I_{k-r,q}}~S])}-P_{R([A_{I_{k-r,q}}~\tilde{S}])}\|
\leq
 \frac{\Delta}{\max\limits_{T\subset ({\rm supp}X\cap I)\setminus\{q\}}\sigma_k([A_{I_T}~S])-\Delta}.
\end{eqnarray*}
If we let $\tilde{\sigma}(q,I)=\max\{\sigma_k([A_{I_T}~S]):T\subset I_C\setminus\{q\},|T|=k-r\}$, where $I_C=\{q_1,\cdots,q_{|I_C|}\}$ and 
$$\tilde{\sigma}(q_1,I)\geq \tilde{\sigma}(q_2,I)\geq\cdots \geq \tilde{\sigma}(q_{|I_C|},I),$$
we have 
\begin{equation}\label{apb-eq3}
\|P_{R([A_{I_{k-r,q}}~S])}-P_{R([A_{I_{k-r,q}}~\tilde{S}])}\|\leq 
\frac{\Delta}{\tilde{\sigma}_k(q_{k-r},I)-\Delta}
\end{equation}
for $q=q_1,\cdots,q_{k-r}$. Hence, by \eqref{apb-eq1}, \eqref{apb-eq2} and \eqref{apb-eq3}, \eqref{back-success} holds if we have
$$1-\gamma(1+\alpha)-\frac{\Delta}{\tilde{\sigma}_k(q_{k-r},I)-\Delta}>0$$
in the large system limit. This completes the proof.

\section*{Acknowledgment}

This work was supported by the Korea Science and Engineering Foundation (KOSEF) grant funded by the Korea government (MEST) (No.2011-0000353).
The authors would like to thank  Kiryung Lee and Yoram Bresler for helpful discussions and providing an SA-MUSIC code.

\ifCLASSOPTIONcaptionsoff
  \newpage
\fi

\bibliographystyle{IEEEbib}
\bibliography{totalbiblio_bispl}

\end{document}